\newtheorem{theorem}{Theorem}[section]
\newtheorem{claim}[theorem]{Claim}
\newtheorem{lemma}[theorem]{Lemma}
\newtheorem{definition}[theorem]{Definition}
\newtheorem{conjecture}[theorem]{Conjecture}
\newtheorem{observation}[theorem]{Observation}
\def\E{\mathbb{E}}
\def\bits{\{0,1\}}
\newcommand{\F}[1]{\mathcal{F}^{#1}}
\newcommand{\Fsize}[2]{\mathcal{F}_{\text{size}}^{#1}[#2]}
\newcommand{\disprob}[2]{\mu_{\text{prob}}^{#1}[#2]}
\newcommand{\dissize}[2]{\mu_{\text{size}}^{#1}[#2]}
\newcommand{\s}{\mathfrak{s}}
\newcommand{\cF}{\mathcal{F}}
\newcommand{\rv}[1]{\bm{#1}}    
\newcommand{\rX}{\rv{X}}
\newcommand{\rY}{\rv{Y}}
\newcommand{\rx}{\rv{x}}
\newcommand{\ry}{\rv{y}}
\newcommand{\rz}{\rv{z}}
\newcommand{\rW}{\rv{W}}
\newcommand{\rj}{\rv{j}}
\newcommand{\rd}{\rv{d}}
\newcommand{\rU}{\rv{U}}
\newcommand{\re}{\rv{e}}
\newcommand{\rV}{\rv{V}}
\newcommand{\rR}{\rv{R}}
\newcommand{\rb}{\rv{b}}
\newcommand{\rpi}{\rv{\pi}}
\newcommand{\rSigma}{\rv{\Sigma}}
\def\locs{\text{locs}}
\def\next{\text{next}}
\newcommand{\val}{\text{val}}
\newcommand{\unif}{\text{Uniform}}
\newcommand{\planted}{\text{Planted}}
\newcommand{\set}{\text{Sets}}
\newcommand{\size}{\text{Sizes}}
\newcommand{\alg}{\mathcal{ALG}}
\newcommand{\one}[1]{\textbf{1}[#1]}
\title{Streaming Lower Bounds and Asymmetric Set-Disjointness}
\begin{document}
\author{
Shachar Lovett
\thanks{Research supported by NSF awards 1953928 and 2006443.}\\
Computer Science Department\\
University of California San Diego\\
\texttt{slovett@ucsd.edu}
\and
Jiapeng Zhang
\thanks{Research supported by NSF CAREER award 2141536.}
\\
Department of Computer Science\\
University of Southern California\\
\texttt{jiapengz@usc.edu}
}
\maketitle

\begin{abstract}
Frequency estimation in data streams is one of the classical problems in streaming algorithms. Following much research, there are now almost matching upper and lower bounds for the trade-off needed between the number of samples and the space complexity of the algorithm, when the data streams are adversarial. However, in the case where the data stream is given in a random order, or is stochastic, only weaker lower bounds exist. In this work we close this gap, up to logarithmic factors. 

In order to do so we consider the needle problem, which is a natural hard problem for frequency estimation studied in (Andoni et al. 2008, Crouch et al. 2016). Here, the goal is to distinguish between two distributions over data streams with $t$ samples. The first is uniform over a large enough domain. The second is a planted model; a secret ''needle'' is uniformly chosen, and then each element in the stream equals the needle with probability $p$, and otherwise is uniformly chosen from the domain. It is simple to design streaming algorithms that distinguish the distributions using space $s \approx 1/(p^2 t)$. It was unclear if this is tight, as the existing lower bounds are weaker. We close this gap and show that the trade-off is near optimal, up to a logarithmic factor. 

Our proof builds and extends classical connections between streaming algorithms and communication complexity, concretely multi-party unique set-disjointness. We introduce two new ingredients that allow us to prove sharp bounds. The first is a lower bound for an asymmetric version of multi-party unique set-disjointness, where players receive input sets of different sizes, and where the communication of each player is normalized relative to their input length. The second is a combinatorial technique that allows to sample needles in the planted model by first sampling intervals, and then sampling a uniform needle in each interval. 
\end{abstract}

\section{Introduction}
\label{sec:intro}

The \emph{needle problem} is a basic question studied in the context of streaming algorithms for stochastic streams \cite{alon1999space, andoni2008better,guha2009revisiting,crouch2016stochastic,braverman2018revisiting}. The goal is to distinguish, using a space-efficient single-pass streaming algorithm, between streams sampled from two possible underlying distributions.

Setting notations, we let $t$ denote the number of samples, $s$ the space of the streaming algorithm, $n$ the domain size, and $p \in (0,1)$ the needle probability. The two underlying distributions are:
\begin{itemize}
    \item \textbf{Uniform:} sample $t$ uniform elements from $[n]$.
    \item \textbf{Planted:}  Let $x \in [n]$ be uniformly chosen (the ``needle''). Sample $t$ elements, where each one independently with probability $p$ equals $x$, and otherwise is sampled uniformly from $[n]$.
\end{itemize}
We will assume that $n=\Omega(t^2)$ so that with high probability, all elements in the stream (except for the needle in the planted model) are unique. The question is what space is needed to distinguish between the two models with high probability.

\paragraph{Sample-space tradeoffs for the needle problem.}
We start with describing some basic streaming algorithms for the needle problem, in order to build intuition. First, note that we need $p = \Omega(1/t)$ as otherwise the two distributions are statistically close, because with high probability the needle never appears in the planted model.

One possible algorithm is to check if there are two adjacent equal elements in the stream. This requires $t=\Theta(1/p^2)$ samples and space $s=\Theta(\log n)$. Another possible algorithm is to store the entire stream in memory, and check for a repeated element. This algorithm requires less samples, $t=\Theta(1/p)$, but more space, $s=t \log n$. Note that in both cases, we get a sample-space tradeoff of $st = \Theta((\log n)/p^2) $. One can interpolate between these two basic algorithms, but the value of the product $st$ remains the same in all of them.
This motivated the following conjecture, given explicitly in \cite{crouch2016stochastic} and implicitly in \cite{andoni2008better}.

\begin{conjecture}[Sample-space tradeoff for the needle problem]
\label{conj:main}
Any single-pass streaming algorithm which can distinguish with high probability between the uniform and planted models, where $p$ is the needle probability, $t$ the number of samples and $s$ the space, satisfies $p^2 st = \Omega(1)$.
\end{conjecture}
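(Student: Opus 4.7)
The plan is to prove Conjecture~\ref{conj:main} (up to logarithmic factors) by combining the classical streaming-to-communication reduction with two new ingredients: a decoupling of the planted distribution into an essentially product distribution on per-player inputs via interval-sampling, and a normalized lower bound for an asymmetric multi-party unique set-disjointness problem. Specifically, partition the stream into $k$ consecutive chunks of lengths $\ell_1,\dots,\ell_k$ summing to $t$, assign chunk $i$ to player $i$, and have the players simulate a space-$s$ single-pass streaming algorithm by passing its memory state along the chain; each player's outgoing message then has at most $s$ bits.

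The first step is to make the players' inputs (essentially) independent. Under the planted distribution the single global needle correlates all chunks, which is fatal for any direct-sum argument. I would use the interval-sampling identity: the planted distribution is equal, up to $O(\log t)$ factors in total variation, to the distribution obtained by independently deciding, for each chunk, whether to plant a needle (with probability roughly $p\ell_i$), and if so planting a uniform symbol at a uniform position within that chunk. Under this rewriting, the players' inputs become product-distributed, and each player's local sub-task reduces to distinguishing a uniform string in $[n]^{\ell_i}$ from a uniform string with a single planted needle.

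The main technical step is to prove the asymmetric lower bound: for any one-way protocol distinguishing the two product distributions above with constant advantage, the per-player message lengths $s_i$ must satisfy $\sum_{i=1}^k s_i/\ell_i = \Omega(1)$. Granting this, plugging in $s_i \le s$ and choosing the partition with $\ell_i \approx 1/p$ (so that $k \approx pt$ and $\sum_i 1/\ell_i \approx p^2 t$) yields $p^2 s t = \Omega(1)$, as desired. The main obstacle is proving this normalized bound: existing results on symmetric $k$-party unique set-disjointness only give $\sum_i s_i = \Omega(k)$, which is insensitive to per-player input size and therefore not enough. To obtain the $1/\ell_i$ normalization, I would set up a direct-sum argument within each player's input, embedding $\ell_i$ copies of a base single-coordinate needle/uniform distinguishing problem into player $i$'s input, conditioning on suitable auxiliary variables to kill residual cross-coordinate correlations introduced by the interval-sampling coupling, and arguing that the information player $i$'s message reveals per coordinate of $\rX_i$ is $\Omega(1/k)$. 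Verifying that this information-theoretic chain survives the decoupling step with only polylogarithmic loss is where I expect the technical heart of the proof to lie.
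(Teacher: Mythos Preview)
Your proposal has a genuine gap in the decoupling step. You claim the planted distribution is close in total variation to a product distribution in which each chunk independently decides whether to plant ``a uniform symbol''. But if each chunk plants a fresh uniform symbol, the resulting per-chunk distribution is exactly uniform over $[n]^{\ell_i}$ (inserting a uniform value at a uniform position does nothing), so your ``decoupled'' distribution is literally the uniform distribution and carries no signal. The only way to retain a signal is for all chunks to plant the \emph{same} needle value $x$---but then the players' inputs are no longer product-distributed, and your stated direct-sum setup (``each player's local sub-task reduces to distinguishing a uniform string from a uniform string with a single planted needle'') collapses. The shared needle value is precisely the global correlation you cannot remove; it is what makes unique set-disjointness, not a per-coordinate problem, the correct target.

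The paper does not attempt to decouple. Instead it writes the planted distribution \emph{exactly} as a mixture: sample a random interval system $F=\{I_1,\dots,I_k\}$ from a carefully designed distribution $\cF$, then place one copy of a single global needle uniformly in each $I_j$. For each fixed $F$ this reduces directly to $k$-party unique set-disjointness with set sizes $|I_1|,\dots,|I_k|$ (the shared needle is the unique common element), and the asymmetric lower bound $\sum_i c_i/|I_i|=\Omega(1)$ yields $s=\Omega(1/\val(F))$ where $\val(F)=\sum_i 1/|I_i|$. The work is then purely combinatorial: build a ``perfect'' randomized interval system $\cF$ whose mixture equals the planted law exactly and whose expected value is $O(p^2 t\log t)$; this is done by a recursive halving construction, and the $\log t$ arises from its analysis, not from any TV approximation. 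Your fixed equal-length partition with $\ell_i\approx 1/p$ cannot reproduce the planted law (needle gaps are far from uniform), and a single partition only yields an adversarial-order bound. The normalized communication bound you aim for is the right target and matches the paper's Theorem~1.4, but note its proof goes through an asymmetric hard \emph{distribution} $\nu$ on $[k]$ and the information complexity of $k$-bit AND, not a within-player direct sum.
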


The best result to date towards \Cref{conj:main} is by Andoni {et al.} \cite{andoni2008better} who showed that $p^{2.5} s t^{1.5} = \Omega(1)$ (this bound is indeed weaker since $p=\Omega(1/t)$). Guha {et al.} \cite{guha2009revisiting} claimed to prove \Cref{conj:main} but later a bug was discovered in the proof, as discussed in \cite{crouch2016stochastic}. In this paper we establish \Cref{conj:main} up to logarithmic factors. We can also handle streaming algorithms which pass over the data stream multiple times, scaling linearly in the number of passes.

\begin{theorem}[Main theorem]
\label{thm:intro:needle}
Any $\ell$-pass streaming algorithm which can distinguish with high probability between the uniform and planted models, where $p$ is the needle probability, $t$ the number of samples and $s$ the space, satisfies $\ell p^2 st \log(t)= \Omega(1)$.
\end{theorem}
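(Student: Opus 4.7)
My plan is to prove Theorem 1.1 by reducing the streaming lower bound to a distributional communication lower bound for an asymmetric multi-party variant of unique set-disjointness. The reduction follows the classical template: an $\ell$-pass streaming algorithm with space $s$, run on a stream split into $k$ consecutive blocks assigned to $k$ players, yields a communication protocol in which each player's contribution is at most $O(\ell s)$ bits per round, and the hardness of the resulting communication task lifts to a lower bound on $s$.

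To obtain the extra $\log(t)$ factor I will not split the stream uniformly. Instead, I will use the combinatorial ``intervals'' technique: first sample a sequence of interval lengths $m_1,\ldots,m_k$ summing to $t$ whose sizes cover all dyadic scales (for instance $m_i \approx 2^i$ with $k = \Theta(\log t)$), and then generate the $i$-th piece of the stream locally from a length-$m_i$ planted distribution with the shared needle $x$. The crucial feature, and the reason this resampling is faithful, is that conditioned on the interval lengths and on the needle identity, the pieces are mutually independent; this lets the planted distribution be embedded as the inputs to $k$ communicating players who share only the public interval lengths and the secret needle.

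The heart of the proof is the asymmetric unique set-disjointness lower bound. Here player $i$ holds a ``set'' $S_i \subseteq [n]$ of size roughly $m_i$, and the players must distinguish the pairwise-almost-disjoint distribution from a planted distribution in which a single element $x \in [n]$ is included in $S_i$ with some probability $q_i$ governed by $m_i$ and $p$. I will prove that every protocol for this task must satisfy $\sum_i C_i / \log m_i = \Omega(1)$, where $C_i$ is the (information) cost contributed by player $i$. The proof will follow the information-complexity/direct-sum framework of Bar-Yossef--Jayram--Kumar--Sivakumar and Chakrabarti--Shi--Wirth--Yao, reducing to a single-coordinate AND/OR gadget and controlling Hellinger distance of transcripts, but with a twist: instead of embedding each virtual coordinate into a fixed position of player $i$'s input, I will randomly embed coordinates into position-sets whose granularity scales with $\log m_i$, so that the per-player direct-sum normalization is $1/\log m_i$ rather than $1/m_i$. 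This per-player $\log m_i$ ``savings'' is exactly what fuels the final $\log(t)$ factor once summed over the $\Theta(\log t)$ dyadic scales.

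Putting the pieces together, the $k = \Theta(\log t)$ players yield total per-round cost $O(\ell s \log t)$, while the needle-density calculation shows that at the scale $m_i \approx 1/p$ the planted marginal on a player's piece is $\Omega(1)$, so the embedded asymmetric unique-disjointness instance is genuinely hard; the normalized lower bound then gives $\ell s \log t \cdot p^2 t = \Omega(1)$, matching the theorem up to constants. The main obstacle I anticipate is proving the normalized asymmetric bound, and in particular making the direct-sum and Hellinger computations go through cleanly when players have very different input sizes and when the planted marginals $q_i$ vary across players; the streaming-to-communication reduction and the distributional equivalence produced by the intervals trick are, by comparison, routine once the correct asymmetric target has been identified.
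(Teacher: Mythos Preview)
Your high-level template (streaming $\to$ multi-party communication, with asymmetric inputs) is the right one, but two of the load-bearing choices in your plan are wrong, and as stated the argument does not go through.

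First, the partition into $k=\Theta(\log t)$ dyadic blocks does \emph{not} produce a unique set-disjointness instance. In the planted model there are $k\sim\mathrm{Bin}(t,p)$ needles, so a block of length $m_i$ contains roughly $p m_i$ needles; large blocks contain many, small blocks often contain none. Your ``planted'' communication problem therefore has the needle in $S_i$ only with some probability $q_i$ --- this is strictly easier than unique set-disjointness (where the needle is in every $S_i$), and there is no reason to expect a nontrivial lower bound for it across all players. The paper instead uses \emph{one interval per needle}: the number of players is random, $k\approx pt$, and the intervals are chosen (via a recursive randomized construction) so that placing a single uniform needle in each interval reproduces the planted distribution exactly. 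This is what the ``perfect randomized interval system'' does.

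Second, the normalization you propose, $\sum_i C_i/\log m_i=\Omega(1)$, is not the right target and your direct-sum sketch does not establish it (indeed, with $m_i=2^i$ and $C_i=\ell s$ it would only give $\ell s\cdot\log\log t=\Omega(1)$, not the bound you claim in the last paragraph). The correct asymmetric bound, and the one the paper proves, is $\sum_i c_i/s_i=\Omega(1)$ where $s_i=|S_i|$ is player $i$'s set size. With each $c_i=\ell s$ this becomes $\ell s\cdot\sum_i 1/|I_i|=\Omega(1)$, so the whole game is to make $\val(F)=\sum_i 1/|I_i|$ small while still simulating the planted distribution. The paper's recursive interval construction achieves $\E[\val(F)]=O(k^2\log t/t)=O(p^2 t\log t)$; the $\log t$ factor enters here, in the combinatorics of the interval system, not from having $\log t$ players. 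Your ``embed into position-sets with granularity $\log m_i$'' idea does not substitute for this step.
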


\subsection{Application: lower bound for frequency estimation in stochastic streams}

For many streaming problems, the current state-of-the-art streaming algorithms space requirements are known to be tight (up to poly-logarithmic terms) in the adversarial model, where the streams arrive in an adversarial order. 
Following a sequence of works on the random-order model \cite{munro1980selection,demaine2002frequency,guha2007space,chakrabarti2008robust,chakrabarti2008tight,andoni2008better,guha2009stream},
Crouch {et al.} \cite{crouch2016stochastic} initiated the study of \emph{stochastic streams}, where the streams are sampled from some underlying distribution. The question is if in this model one can attain better streaming algorithms compared to the adversarial model, utilizing the stochastic nature of the streams; or whether the existing lower bounds can be strengthened to this model as well. The needle problem we described is an example of a problem in the stochastic model.

A basic problem in the streaming literature, starting with the pioneering work of \cite{alon1999space}, is that of estimating the \emph{frequency moments} of a stream. Given a stream $x_1,\ldots,x_t$ of elements from $[n]$, let $f_x$ denote the number of times an element $x$ appears in the stream. The $k$-th frequency moment of the stream is
$$
F_k = \sum_{x \in [n]} f_x^k.
$$
In the adversarial model, there are matching upper and lower bounds of $\tilde{\Theta}(n^{1-2/k})$ \footnote{We use $\tilde{\Theta},\tilde{\Omega}$ to ignore poly-logarithmic terms.} on the space needed for a streaming algorithm to approximate $F_k$ \cite{chakrabarti2003near,indyk2005optimal}. It was conjectured by \cite{crouch2016stochastic} that the same lower bound also holds in the stochastic model. They showed that the result of \cite{andoni2008better} gives a somewhat weaker lower bound of $\tilde{\Omega}(n^{1-2.5/k})$ space, and that \Cref{conj:main}, if true, implies the tight bound of $\tilde{\Omega}(n^{1-2/k})$. \Cref{thm:intro:needle} thus verifies their conjecture, up to logarithmic terms, which still implies a lower bound of $\tilde{\Omega}(n^{1-2/k})$. We refer to \cite{crouch2016stochastic} for further details.

We note another related application, communicated to us by David Woodruff. McGregor {et al.} \cite{mcgregor2012space} studied streaming algorithms based on sub-sampling a data stream. In particular, one of the problems they studied is that of frequency estimation. They designed space-efficient streaming algorithms based on sub-sampling, and also gave matching lower bounds, based on the results of Guha {et al.} \cite{guha2009revisiting}. However, as later a bug was found in this latter work, the journal version of McGregor {et al.} \cite{mcgregor2016space} removed the lower bounds. Using \Cref{thm:intro:needle} the claimed lower bounds hold, up to a logarithmic factor.

\subsection{Proof approach}
\label{sec:intro:proof-approach}

We prove \Cref{thm:intro:needle} by a reduction to the unique set-disjointness problem in communication complexity. This is a common technique used to prove lower bounds for streaming algorithms \cite{chakrabarti2003near,bar2004information,andoni2008better,guha2009revisiting,braverman2018revisiting,kamath2021simple}. 

The basic idea is to partition the stream samples into intervals $I_1,\ldots,I_k$ and consider the stream distribution where we place a single needle uniformly in each interval, and sample the other elements in the stream uniformly. It is straightforward to show that any streaming algorithm which can distinguish this distribution from the uniform distribution using space $s$, can be used to construct a communication protocol that solves the $k$-party unique set-disjointness problem, where player $i$ gets a set of size $|I_i|$, and where each player sends $s$ bits. If for example we take the intervals to be of equal size $|I_1|=\ldots=|I_k|=t/k$, then using existing tight lower bounds for multi-party unique set-disjointness, one can prove tight sample-space lower bounds in the adversarial model\footnote{Concretely, the total communication of the protocol is $ks$, whereas the lower bound for $k$-party unique set-disjointness is $\Omega(t/k)$. Thus $ks =\Omega(t/k)$. Taking $k=pt$ gives $p^2 s t=\Omega(1)$.}. This was the approach taken by many of the previous works in this area \cite{chakrabarti2003near,bar2004information,andoni2008better,guha2009revisiting,braverman2018revisiting,kamath2021simple}. Our plan is to extend this approach to the stochastic model. However, this presents two new challenges. 

First, a simple calculation shows that the number of needles is $k \approx pt$ with high probability, but the gaps between needles are not uniform; for example, the two closest needles have a gap of $\approx p^2 t$. This necessitates taking intervals of very different lengths, if we still plan to place one needle per interval. In turn, this requires proving lower bounds on multi-party unique set-disjointness when the players receive inputs of different lengths. In this model, it no longer makes sense to measure the total communication of the protocols. Instead, we develop a new measure, which normalizes the communication of each player relative to their input length. We expand on this in \Cref{sec:intro:udisj}.

The second challenge is that using a single partition of the stream by intervals, and then planting a uniform needle in each interval, cannot induce the planted needle distribution. Instead, we need to carefully construct a distribution over sets of intervals, such that if then one places a uniform needle in each interval, the resulting stream distribution mimics exactly the planted distribution. We expand on this in \Cref{sec:intro:interval}.

\subsection{Multi-party unique set-disjointness with different set sizes}
\label{sec:intro:udisj}

We start by defining the standard multi-party unique set-disjointness problem. Let $k \ge 2$ denote the number of players. The players inputs are sets $S_1,\ldots,S_k \subset [n]$. They are promised that one of two cases hold:
\begin{itemize}
    \item \textbf{Disjoint}: the sets $S_1,\ldots,S_k$ are pairwise disjoint.
    \item \textbf{Unique intersection}: there is a common element $x \in S_1 \cap \ldots \cap S_k$, and the sets $S_1 \setminus \{x\},\ldots,S_k \setminus \{x\}$ are pairwise disjoint.
\end{itemize}
Their goal is to distinguish which case is it, while minimizing the communication\footnote{Formally, we consider randomized multi-party protocols in the \emph{blackboard model}, where at each turn one of the players writes a message on a common blackboard seen by all the players.}.

Observe that under any of the two promise cases, one of the players' inputs has size $|S_i| \le n/k+1$. A simple protocol is that such a player sends their input, which allows the other players to solve the problem on their own. This simple protocol sends $O(n/k \cdot \log n)$ bits. This can be further improved to $O(n/k)$ bits using the techniques of \cite{haastad2007randomized}.
A line of research 
\cite{alon1999space, bar2004information, chakrabarti2003near, gronemeier2009asymptotically,jayram2009hellinger, yang2022lifting} studied lower bounds. A tight lower bound was first achieved by \cite{gronemeier2009asymptotically}.

\begin{theorem}[\cite{gronemeier2009asymptotically,jayram2009hellinger}]
\label{thm:intro:udist-classic}
Any randomized communication protocol which solves the $k$-party unique set-disjointness problem must send $\Omega(n/k)$ bits.
\end{theorem}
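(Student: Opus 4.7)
The plan is to use the information-complexity framework, the standard route for multi-party set-disjointness lower bounds. I would fix a hard input distribution $\mu$ on $(S_1,\ldots,S_k)$: for each coordinate $j \in [n]$ independently, sample a uniformly random ``owner'' $i_j \in [k]$ and place $j$ only in $S_{i_j}$. Under $\mu$ the sets are always pairwise disjoint. The intersection distribution $\nu$ additionally plants a uniformly random coordinate $j^*$ into every $S_i$, so a correct protocol must distinguish $\mu$ from $\nu$ with constant advantage.

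For a randomized protocol $\Pi$ with total communication $C$ bits, define
\[
\mathrm{IC}_\mu(\Pi) \;=\; I\bigl(X ; \Pi(X) \,\bigm|\, R\bigr),
\]
where $X = (S_1,\ldots,S_k)$ is the input and $R = (i_1,\ldots,i_n)$ is treated as public randomness. Since the transcript has length at most $C$, $\mathrm{IC}_\mu(\Pi) \le C$. Because $\mu$ factorizes across coordinates conditional on $R$, the classical direct-sum argument---the other coordinates can be sampled by each player from public coins---yields $\mathrm{IC}_\mu(\Pi) \ge n \cdot \mathrm{IC}(\mathrm{AND}_k)$, where $\mathrm{IC}(\mathrm{AND}_k)$ is the per-coordinate information cost of any protocol correct on inputs in $\{e_1,\ldots,e_k,\mathbf{1}\} \subset \{0,1\}^k$ (the $e_i$ model the disjoint case and $\mathbf{1}$ models a planted intersection).

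The main technical step is to show $\mathrm{IC}(\mathrm{AND}_k) = \Omega(1/k)$. I would follow the Hellinger-distance approach of Bar-Yossef et al., as sharpened by Gronemeier and Jayram. Correctness forces $h^2(\Pi_{\mathbf{1}}, \Pi_{e_i}) = \Omega(1)$ for every $i$; the $k$-party cut-and-paste lemma, combined with a careful triangle-inequality argument in the Hellinger metric, then transfers this into the statement that the average of $h^2(\Pi_{e_i}, \Pi_{e_{i'}})$ over pairs $i \ne i'$ is also $\Omega(1)$. Converting Hellinger distance back to mutual information (via $I \gtrsim h^2$ for the symmetric $k$-ary input distribution) gives $\mathrm{IC}(\mathrm{AND}_k) = \Omega(1/k)$, and combining with the direct sum yields $C \ge \Omega(n/k)$.

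The hardest part is this sharp per-coordinate bound. A naive two-party cut-and-paste argument applied pairwise only yields $\Omega(1/k^2)$ per coordinate, giving the weaker total bound $\Omega(n/k^2)$ that appeared in earlier works. The improvement to the sharp $\Omega(1/k)$ factor requires carefully exploiting the full symmetry of the input distribution on $\{e_1,\ldots,e_k\}$---this multi-party Hellinger geometry is the core contribution of Gronemeier's analysis, and is what makes the final bound tight up to constants.
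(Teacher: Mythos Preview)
Your proposal is correct and follows the same information-complexity route the paper uses. Note, however, that in the paper this theorem is \emph{cited} rather than proved: it is stated as prior work of Gronemeier and Jayram. The paper later recovers it as the symmetric special case $\nu(i)=1/k$ of its own \Cref{thm:udisj_prob}, whose proof has exactly the structure you describe---a hard distribution with per-coordinate owners, a conditional-information bound (\Cref{lemma:asys_info_bits}), a direct-sum reduction to a single-coordinate AND protocol (\Cref{lemma:pi_to_lambda}), and finally the per-coordinate bound $\mathrm{CIC}(\Lambda)=\Omega(1/k)$. The one difference is that the paper treats this last step as a black box (\Cref{thm:gronemeir}), whereas you sketch the Hellinger/cut-and-paste argument that actually establishes it; your sketch of that step is accurate and your remark about the naive pairwise argument losing a factor of $k$ is exactly the point.
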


As discussed in \Cref{sec:intro:proof-approach}, we need a fine-grained variant of the unique set-disjointness problem, where the set sizes are fixed and can be different between the players.

\begin{definition}[Fixed-size multi-party unique set-disjointness]
Let $s_1,\ldots,s_k \ge 1$. The $[s_1,\ldots,s_k]$-size $k$-party unique set-disjointness problem is a restriction of the $k$-party unique set-disjointness problem to input sets of size $|S_i|=s_i$.
\end{definition}

Consider protocols for the $[s_1,\ldots,s_k]$-size $k$-party unique set-disjointness problem. For any $i \in [k]$, one option is that the $i$-th player sends their input to the rest of the players, which requires sending $c_i=\Omega(s_i)$ bits. If the input sizes $s_1,\ldots,s_k$ are very different, it no longer makes sense to consider the total amount of bits sent by the players. Instead, we should normalize the number of bits sent by the $i$-th player $c_i$ by its input length $s_i$. We prove that with this normalization, the simple protocols are indeed optimal.

Towards this, we make the following definition: a $k$-party protocol $\Pi$ is called $[c_1,\ldots,c_k]$-bounded if in any transcript of $\Pi$, the $i$-th player sends at most $c_i$ bits.

\begin{theorem}[Lower bound for fixed-size multi-party unique set-disjointness]
\label{thm:intro:udisj}
Let $\Pi$ be a randomized $k$-party $[c_1,\ldots,c_k]$-bounded protocol, which solves with high probability the $[s_1,\ldots,s_k]$-size $k$-party unique set-disjointness problem, where $\sum s_i \le n/2$. Then
$$
\sum_{i \in [k]} \frac{c_i}{s_i} = \Omega(1).
$$
\end{theorem}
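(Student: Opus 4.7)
The plan is to prove the theorem via an information-complexity argument in the style of Bar-Yossef--Jayram--Kumar--Sivakumar, adapted to the asymmetric regime. Define the \emph{normalized information cost}
\[
\mathrm{WIC}(\Pi) \;:=\; \sum_{i=1}^{k} \frac{I\bigl(T;\, S_i \mid S_{-i}, R\bigr)}{s_i}
\]
of the protocol $\Pi$, where $T$ is the transcript and $R$ is the public randomness, measured under a hard input distribution $\mu$. For $\mu$ I take a balanced mixture of a \textsc{yes} marginal (uniformly random pairwise disjoint $(S_1,\dots,S_k)$ with $|S_i|=s_i$) and a \textsc{no} marginal (a uniform planted $x \in [n]$, together with uniform pairwise disjoint sets of sizes $s_i$ subject to $x \in \bigcap_i S_i$); the hypothesis $\sum s_i \le n/2$ keeps both marginals well-defined. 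The theorem will follow from the two bounds
\[
\mathrm{WIC}(\Pi) \;\le\; \sum_i c_i/s_i \qquad\text{and}\qquad \mathrm{WIC}(\Pi) \;=\; \Omega(1).
\]

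The upper bound is the standard blackboard observation. Fixing $S_{-i}$ and $R$, at every step of $\Pi$ the speaker and each message of each player $j \ne i$ is a deterministic function of the transcript-so-far, so the entire transcript is determined by the at most $c_i$ bits written by player $i$. Hence $H(T \mid S_{-i}, R) \le c_i$ and therefore $I(T; S_i \mid S_{-i}, R) \le c_i$; dividing by $s_i$ and summing over $i$ gives the upper bound on $\mathrm{WIC}(\Pi)$.

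For the lower bound $\mathrm{WIC}(\Pi) = \Omega(1)$, I plan a direct-sum decomposition over the coordinates of $[n]$. Let $Y_{ij} = \one{j \in S_i}$. After a standard transition from $\mu$ to the nearby product distribution in which each $j \in [n]$ is independently assigned to bin $i$ with probability $p_i := s_i/n$ (or to the empty bin otherwise), the chain rule gives $I(T; S_i \mid S_{-i}, R) \gtrsim \sum_j I(T; Y_{ij} \mid Y_{-i,j}, R')$ for a suitably enlarged conditioning $R'$. Consequently
\[
\mathrm{WIC}(\Pi) \;\gtrsim\; \sum_{j \in [n]} \mathrm{wIC}_j(\Pi), \qquad \mathrm{wIC}_j(\Pi) \;:=\; \sum_{i=1}^{k} \frac{I\bigl(T; Y_{ij} \mid Y_{-i,j}, R'\bigr)}{s_i},
\]
so it suffices to show the single-coordinate lower bound $\mathrm{wIC}_j(\Pi) = \Omega(1/n)$, which sums over the $n$ coordinates to $\Omega(1)$.

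The single-coordinate subproblem is an \emph{asymmetric promise-AND}: each player $i$ holds a bit $Y_i$ with marginal $p_i = s_i/n$, promised either ``at most one $Y_i = 1$'' (the coordinate is not planted) or ``$Y_i = 1$ for all $i$'' (the coordinate is the planted one). Correctness of $\Pi$ forces the transcript distributions in these two cases to have constant total variation, which the standard Hellinger / cut-and-paste machinery converts into an information lower bound. The main obstacle I expect is precisely this asymmetric single-coordinate lemma: Gronemeier--Jayram establish its symmetric analog $\sum_i I(T; Y_i \mid Y_{-i}) = \Omega(1)$ (under uniform marginals $p_i = 1/k$) by a symmetric averaging over the player whose bit is swapped in the cut-and-paste step, producing the crucial $1/k$ factor; in our asymmetric setting I need a \emph{weighted} averaging in which player $i$'s contribution is scaled by $1/p_i = n/s_i$. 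Showing that the resulting Hellinger inequality remains tight --- i.e.\ that the weighted bound $\sum_i I(T; Y_i \mid Y_{-i})/p_i = \Omega(1)$ holds without any degradation in the asymmetry ratio $\max_i s_i / \min_i s_i$ --- will be the technical heart of the proof.
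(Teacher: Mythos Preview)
Your approach is essentially the paper's: pass to a product input distribution governed by an ``ownership'' variable $W$ with $\Pr[W_j=i]=\nu(i)\propto s_i$, upper-bound each player's information by $c_i$, decompose by direct sum over coordinates $j\in[n]$, and reduce to a single-coordinate AND problem. The paper conditions on $W$ rather than on $S_{-i}$; this makes the direct-sum step clean (conditioned on $W$ all bits $X_i(j)$ are mutually independent) and replaces your vague ``suitably enlarged $R'$'' with an explicit conditioning, but this is a presentational difference, not a mathematical one.

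The substantive point is that the step you flag as ``the technical heart'' --- a new \emph{weighted} asymmetric single-coordinate Hellinger/cut-and-paste lemma --- is not needed. Under the product distribution, the bit $X_i(j)$ is deterministically $0$ unless $W_j=i$, so
\[
\frac{1}{\nu(i)}\, I\bigl(X_i(j):\Pi \,\big|\, W\bigr)
\;=\;
\frac{1}{\nu(i)}\sum_{i'}\nu(i')\, I\bigl(X_i(j):\Pi \,\big|\, W_j=i',\,W_{-j}\bigr)
\;=\;
I\bigl(X_i(j):\Pi \,\big|\, W_j=i,\,W_{-j}\bigr),
\]
i.e.\ the weight $1/\nu(i)$ exactly cancels the probability that player $i$ owns the coordinate, and all other terms vanish. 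Summing over $i$ therefore yields the \emph{unweighted} quantity $\sum_{i} I(b:\Lambda(e_i[b])\mid R)$ for the induced AND protocol $\Lambda$, which is exactly $k\cdot\mathrm{CIC}(\Lambda)$ for the \emph{uniform} ownership distribution on $[k]$. Gronemeier's existing symmetric bound $\mathrm{CIC}(\Lambda)=\Omega(1/k)$ then closes the argument with no new single-coordinate inequality. So your plan is correct, but the step you expected to be hard dissolves once you observe this cancellation; the asymmetry is fully absorbed by the choice of product distribution and never reaches the AND lemma.
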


We conclude this subsection with three comments. First, the condition $\sum s_i \le n/2$ is a technical condition emerging from the proof technique; it suffices for our application, and we believe that it can be removed in future work. 

Next, it is known that the hard case for the standard  multi-party unique set-disjointness problem is when all the sets have about the same size, namely when $s_1=\ldots=s_k=\Theta(n/k)$. In this case \Cref{thm:intro:udisj} implies $\sum c_i = \Omega(n/k)$ which recovers \Cref{thm:intro:udist-classic}.

Last, we prove \Cref{thm:intro:udisj} by constructing a hard distribution over inputs, and then proving a lower bound for deterministic protocols under this distribution. The hard distribution is a natural one, the uniform distribution over inputs of sizes $s_1,\ldots,s_k$. For details see \Cref{thm:udisj_size}. Moreover, we show (\Cref{claim:udisj_equiv}) that \Cref{thm:intro:udisj} and \Cref{thm:udisj_size} are in fact equivalent.

\subsection{Efficient reduction of the needle problem to multi-party unique set-disjointness}
\label{sec:intro:interval}

We establish \Cref{thm:intro:needle} by reducing lower bounds for the needle problem to lower bounds for the unique set-disjointness, and then applying \Cref{thm:intro:udisj} (\Cref{thm:udisj_size} more precisely). To do so, we need a way of mapping inputs to the unique set-disjointness problem to inputs for a streaming algorithm. A natural way to do so, taken for example by \cite{andoni2008better}, is to partition the stream into intervals and assign one to each player. We follow the same approach but generalize it, so we can use it to simulate the planted distribution of the needle problem by random inputs to the unique set-disjointness problem.

Recall that $n$ denotes the domain size, $t$ the number of samples and $p$ the needle probability. Our goal will be to simulate the planted distribution using inputs to multi-party unique set-disjointness. In order to do so, we define \emph{interval systems}.

\begin{definition}[Interval systems]
An interval system $F$ is a family of pairwise disjoint non-empty intervals $F=\{I_1,\ldots,I_k\}$ with $I_1,\ldots,I_k \subset [t]$.
\end{definition}

Given an interval system $F$, we define a planted distribution $\planted[F]$ over streams $X \in [n]^t$ as follows:
\begin{enumerate}
    \item Sample uniform needle $x \in [n]$;
    \item In each interval $I \in F$ sample uniform index $i \in I$ and set $X_i=x$;
    \item Sample all other stream elements uniformly from $[n]$.
\end{enumerate}

Using \Cref{thm:intro:udisj}, we prove a space lower bound for streaming algorithms that can distinguish between the uniform distribution and the planted distribution for $F$. Here is where we exploit the fact that we can prove lower bounds for unique set-disjointness also when the set sizes vary between the players. We use the following notation: given an interval system $F$, its value is $\val(F)=\sum_{I \in F} \frac{1}{|I|}$.

\begin{lemma}
\label{lemma:intro:planted-interval-lb}
Let $F$ be an interval system.
Any streaming algorithm which with high probability distinguishes between $\planted[F]$ and the uniform distribution must use space 
$$
s = \Omega \left(\frac{1}{\val(F)} \right).
$$
\end{lemma}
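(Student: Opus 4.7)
The plan is to reduce the problem to fixed-size multi-party unique set-disjointness and then invoke \Cref{thm:intro:udisj} (more precisely, the distributional version over uniformly random promise inputs that the proof of that theorem is said to establish). Write $F = \{I_1,\ldots,I_k\}$ and $s_i := |I_i|$. Because the intervals are disjoint sub-intervals of $[t]$, one has $\sum_i s_i \le t$, and together with the standing hypothesis $n = \Omega(t^2)$ this gives $\sum_i s_i \le n/2$, satisfying the size condition of \Cref{thm:intro:udisj}.

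Given a single-pass streaming algorithm $\alg$ of space $s$ that distinguishes $\planted[F]$ from the uniform stream, I would construct a $k$-party protocol as follows. Assign one player per interval, and let player $i$ hold a set $S_i \subseteq [n]$ of size $s_i$ coming from the set-disjointness promise. Using public randomness the players jointly sample an independent uniform element of $[n]$ for every position of $[t]$ not contained in any interval, along with the internal coin tosses of $\alg$; using private randomness, player $i$ picks a uniform bijection from $S_i$ to the positions of $I_i$ and fills those positions accordingly. Ordering the intervals by position along $[t]$, player $1$ simulates $\alg$ on the prefix of the stream up through the end of $I_1$ and writes the resulting $s$-bit memory state on the blackboard; player $2$ reads it and continues the simulation through the end of $I_2$; and so on, with the final player outputting $\alg$'s verdict. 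Each player writes exactly $s$ bits on the blackboard, so the protocol is $[s,\ldots,s]$-bounded.

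It remains to check that the protocol solves the promise problem with high probability under the uniform distribution over inputs of sizes $s_1,\ldots,s_k$. Under the uniform distribution over ``disjoint'' inputs, the simulated stream is uniform on $[n]$ outside the intervals and consists of uniform distinct elements of $[n]$ inside them; under the uniform distribution over ``unique-intersection'' inputs with needle $x$, the simulation additionally plants $x$ at a uniformly random position of each $I_i$. A birthday calculation using $\sum_i s_i \le t$ and taking the hidden constant in $n = \Omega(t^2)$ sufficiently large bounds the total variation distance of these two simulated distributions from the uniform stream on $[n]^t$ and from $\planted[F]$, respectively, by any desired small constant. Hence $\alg$'s distinguishing advantage transfers, up to a small additive loss, to the protocol's distributional success, and the distributional form of \Cref{thm:intro:udisj} then gives
$$
s \cdot \val(F) \;=\; \sum_{i=1}^k \frac{c_i}{s_i} \;=\; \Omega(1),
$$
which is the claimed bound $s = \Omega(1/\val(F))$. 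The main obstacle I foresee is exactly this coupling step: the set-disjointness promise forces distinctness of stream entries within and across intervals that $\planted[F]$ does not impose, and the $n = \Omega(t^2)$ assumption is precisely what absorbs this mismatch; an alternative would be to define the hard input distribution for \Cref{thm:intro:udisj} to be the reduction's actual output distribution from the outset, bypassing the coupling altogether.
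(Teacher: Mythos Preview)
Your proposal is correct and follows essentially the same route as the paper: embed the $[s_1,\ldots,s_k]$-size unique set-disjointness instance into a stream by filling each interval $I_i$ with a random ordering of $S_i$ and the remaining positions with public uniform samples, pass the $s$-bit memory along the intervals to get an $[s,\ldots,s]$-bounded protocol, absorb the distinctness mismatch via $n=\Omega(t^2)$, and conclude with the distributional lower bound (\Cref{thm:udisj_size}). The paper organizes this as \Cref{lemma:embedding_protocol} plus \Cref{lemma:planted-interval-lb}; your write-up matches both the construction and the coupling argument, and your observation that $\sum_i s_i \le t \le n/2$ already meets the size hypothesis of \Cref{thm:intro:udisj} is exactly what is used.
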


In order to complete the reduction, we need to simulate the planted distribution using planted distributions for interval systems $F$. Clearly, this cannot be done using a single interval system, and hence we need to consider \emph{randomized} interval systems.

A randomized interval system $\cF$ is a distribution over interval systems $F$. The planted distribution $\planted[\cF]$ for $\cF$ is defined by first sampling $F \sim \cF$ and then $X \sim \planted[F]$. The value of $\cF$ is $\val(\cF)=\E_{F \sim \cF}[\val(F)]$. We can extend \Cref{lemma:intro:planted-interval-lb} to randomized interval systems.

\begin{lemma}
\label{lemma:intro:planted-rand-interval-lb}
Let $\cF$ be a randomized interval system.
Any streaming algorithm which with high probability distinguishes between $\planted[\cF]$ and the uniform distribution must use space 
$$
s = \Omega \left(\frac{1}{\val(\cF)} \right).
$$
\end{lemma}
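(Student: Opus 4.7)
The plan is to deduce the randomized statement from \Cref{lemma:intro:planted-interval-lb} by a standard averaging argument on the random interval system. Let $\cA$ be a streaming algorithm using space $s$ that distinguishes $\planted[\cF]$ from the uniform distribution with constant advantage, say at least $1/3$. My goal is to exhibit a single (deterministic) interval system $F^*$ in the support of $\cF$ such that simultaneously (i) $\val(F^*) = O(\val(\cF))$, and (ii) $\cA$ distinguishes $\planted[F^*]$ from uniform with constant advantage. Then applying \Cref{lemma:intro:planted-interval-lb} to $F^*$ immediately yields $s = \Omega(1/\val(F^*)) = \Omega(1/\val(\cF))$.

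For each interval system $F$, define the advantage $\delta(F) := \Pr[\cA(\planted[F])=1] - \Pr[\cA(\unif)=1]$. Since by construction $\planted[\cF]$ is the mixture $\E_{F\sim\cF}\,\planted[F]$, linearity of expectation gives $\E_{F\sim\cF}[\delta(F)] \ge 1/3$. Likewise, by definition $\E_{F\sim\cF}[\val(F)] = \val(\cF)$, so Markov's inequality yields $\Pr_{F\sim\cF}[\val(F) \le 10\val(\cF)] \ge 9/10$. Let $E$ denote this event. Then
$$
\E_{F\sim\cF}[\delta(F)\mid E] \;\ge\; \frac{\E[\delta(F)] - \Pr[\overline{E}]}{\Pr[E]} \;\ge\; \frac{1/3 - 1/10}{9/10} \;>\; 1/4,
$$
so some $F^*$ in the support of $\cF$ satisfies both $\val(F^*) \le 10\val(\cF)$ and $\delta(F^*) \ge 1/4$, which is exactly (i) and (ii).

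The only gap is that \Cref{lemma:intro:planted-interval-lb} is quoted for ``high-probability'' distinguishers, whereas I have only produced a constant advantage $1/4$ on $F^*$. This is standard: run $O(1)$ independent copies of $\cA$ in parallel on the same stream and take the majority vote; by a Chernoff bound this boosts the advantage to $1-o(1)$, while increasing the space from $s$ to $O(s)$. Feeding this amplified algorithm into \Cref{lemma:intro:planted-interval-lb} gives $O(s) = \Omega(1/\val(F^*)) = \Omega(1/\val(\cF))$, as required.

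I do not expect any serious obstacle: the argument is pure averaging plus a black-box invocation of \Cref{lemma:intro:planted-interval-lb}. The only point deserving care is the quantitative relationship between the unconditional expected advantage and the conditional advantage after Markov, together with the constant-factor amplification at the end; all constants here are flexible and can be adjusted to match whatever precise notion of ``high probability'' is used in the hypothesis.
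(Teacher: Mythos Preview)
Your overall strategy --- Markov on $\val(F)$, averaging on the distinguishing quality, then invoking \Cref{lemma:intro:planted-interval-lb} on a single good $F^*$ --- is exactly the paper's approach.

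There is, however, a genuine gap in your amplification step. Running $O(1)$ independent copies of $\cA$ \emph{on the same stream} and taking a majority does \emph{not} boost distributional advantage. The Chernoff argument would require the copies' outputs to be independent conditioned on the correct answer, but here they all see the identical input $X$; only the internal randomness varies. Concretely, if $\cA$ is deterministic then all copies output the same bit and majority does nothing, yet a deterministic $\cA$ can perfectly well have advantage $1/4$ against a distribution. More generally, letting $p(X)=\Pr_r[\cA(X,r)=1]$, majority over fresh randomness pushes $p(X)$ toward $0$ or $1$ according to whether $p(X)<1/2$ or $p(X)>1/2$, which need not increase $\E_{\planted[F^*]}[p(X)]-\E_{\unif}[p(X)]$.

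The fix is to avoid amplification altogether by using the ``high probability'' hypothesis at full strength from the start. If $\cA$ has error at most $\varepsilon$ (say $\varepsilon=0.1\%$) against the mixture $\planted[\cF]$, then letting $\text{err}(F)$ denote its error against $\planted[F]$ we have $\E_F[\text{err}(F)]\le\varepsilon$, so by Markov $\Pr_F[\text{err}(F)>5\varepsilon]\le 1/5$. Combined with your Markov bound on $\val(F)$, a union bound gives some $F^*$ with $\val(F^*)\le 10\,\val(\cF)$ and $\text{err}(F^*)\le 5\varepsilon$, which is already small enough to plug straight into \Cref{lemma:intro:planted-interval-lb}. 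This is precisely how the paper proceeds.
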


To prove the lower bound for the needle problem, we need $\planted[\cF]$ to simulate exactly the planted distribution; we call such randomized interval systems \emph{perfect}.

\begin{definition}[Perfect randomized interval systems]
A randomized interval system $\cF$ is called \emph{perfect} if $\planted[\cF]$ is distributed exactly as the planted distribution.
\end{definition}

In light of \Cref{lemma:intro:planted-rand-interval-lb}, we need a perfect randomized interval system $\cF$ with as low a value as possible. 
It is relatively simple to show that if $\cF$ is perfect then $\val(\cF) = \Omega(p^2 t)$. The following theorem gives a construction nearly matching the lower bound.

\begin{theorem}
\label{thm:intro:interval}
There exists a perfect randomized interval system $\cF$ with
$\val(\cF) = O\left( p^2 t \log(t) \right)$.
\end{theorem}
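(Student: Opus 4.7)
The plan is to construct $\cF$ via a sequential renewal-type process on $[t]$ that alternates between ``gap'' segments (no interval placed) and ``interval'' segments (each containing one implicit needle).

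The key combinatorial ingredient I would use is the following conditional-uniformity property of geometric random variables: if $X$ and $Y$ are independent with $P(X=k) = P(Y=k) = (1-q)^k q$ for $k \ge 0$, then conditional on $L := X+Y+1$, the variable $X$ is uniform on $\{0, 1, \ldots, L-1\}$. Applied to a needle $s$ surrounded by an interval $I = [s - X, s + Y]$ of length $L$, this says $s$ is uniform in $I$ given $I$---which is exactly the within-interval uniformity required for $\cF$ to be perfect.

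For the construction, I would traverse $[t]$ left to right and alternate geometric-length gaps with intervals of length $L = X + Y + 1$ built from iid geometric extensions around an implicit needle. The parameters are chosen so that each position is marginally a needle with probability $p$. A direct calculation shows that a purely iid construction achieves this marginal but fails the joint condition: for instance, the probability that two adjacent positions are both needles comes out a constant factor off from $p^2$. To repair this I would introduce a multi-scale refinement across $R = O(\log t)$ dyadic scales, sampling intervals whose lengths follow a mixture approximating a direct geometric distribution with mean $1/p$, and resolving cross-scale overlaps top-down. The mixture is tuned so the aggregate needle process is exactly Bernoulli($p$), while the uniformity lemma preserves the within-interval condition at every scale.

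Given such a construction, the value bound follows from a direct calculation: the expected number of intervals is $\Theta(pt)$ and $\E[1/|I|] = \Theta(p \log(1/p))$ (the log factor arising from the harmonic-like sum $\sum_\ell (1-p)^{\ell-1} p / \ell$ for direct geometric lengths, which is larger than the corresponding $\E[1/L] = \Theta(p)$ one would get from the naive iid convolution), yielding $\val(\cF) = \Theta(p^2 t \log(1/p)) = O(p^2 t \log t)$ using the standing assumption $p = \Omega(1/t)$. The main obstacle will be verifying exact perfection of the multi-scale scheme: the within-interval uniformity is essentially free from the geometric lemma, but matching the full joint Bernoulli($p$) distribution on all of $[t]$ demands a delicate coupling across scales and boundary corrections at positions $1$ and $t$, and will form the bulk of the proof effort.
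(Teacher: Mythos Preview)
Your proposal has a genuine gap at its core. The geometric-uniformity lemma is correct and pleasant, but as you yourself note, a pure renewal construction does not reproduce the joint Bernoulli($p$) law of the needle locations. Your fix---a ``multi-scale refinement across $O(\log t)$ dyadic scales'' with a tuned mixture and top-down overlap resolution---is never actually specified. You state that verifying perfection ``will form the bulk of the proof effort,'' but that effort is absent: there is no definition of the multi-scale scheme, no argument that the resulting needle set is exactly i.i.d.\ Bernoulli($p$), and no justification for why the interval lengths in the final construction end up with a direct geometric law (which your value calculation $\E[1/|I|]=\Theta(p\log(1/p))$ tacitly assumes). Without these pieces, the proposal is a sketch of an idea rather than a proof.

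For comparison, the paper's construction is both different and much simpler, and it sidesteps the difficulty you identify. Rather than trying to build the Bernoulli process out of a renewal process, the paper first conditions on the number of needles $k\sim\mathrm{Bin}(t,p)$ and observes that, given $k$, the needle locations are a uniformly random $k$-subset of $[t]$. It then constructs a perfect randomized $[t,k]$-interval system by top-down dyadic recursion: split $[t]$ into two halves, draw $j$ from the hypergeometric distribution to decide how many of the $k$ needles fall left, and recurse on the two halves with $j$ and $k-j$. Perfection is a one-line induction, and the value bound $\val(\cF[t,k])\le k^2\log(2t)/t$ follows from a short recurrence using only $\E[j]=k/2$ and $\E[j^2]\le k(k+1)/4$. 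Mixing over $k$ and using $\E[k^2]=\Theta(p^2 t^2)$ gives the theorem. There are no boundary corrections, no cross-scale couplings, and no geometric random variables; the $\log t$ factor arises transparently as the recursion depth.
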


\Cref{thm:intro:needle} now follows directly by combining \Cref{lemma:intro:planted-rand-interval-lb} and \Cref{thm:intro:interval}. 

\subsection{Related works}

In a seminal work, Miltersen et al. \cite{miltersen1995data} first observed connections between asymmetric communication complexity and its applications to data structures in the cell probe model. Since then, several works \cite{barkol2000tighter, jayram2003cell, puatracscu2006time, ba2010lower, chattopadhyay2018simulation} proved data structure lower bounds and streaming lower bounds via connections to asymmetric communication complexity lower bounds. To the best of our knowledge, all these works built on two-party communication problems. In contrast, we consider multi-party communication complexity in this work. It is interesting to ask if multi-party communication can provide more applications to data structure and streaming lower bounds.

Other than connections to data structure lower bounds and streaming lower bounds, Dinur {et al.} \cite{dinur2016memory} studied the needle problem in cryptography. It would be interesting to explore more connections between our work and cryptography.

\paragraph{Acknowledgements.} We thank David Woodruff for helpful discussions about streaming algorithms, and for insightful comments on an earlier version of this paper.

\paragraph{Paper organization.} We prove lower bounds for multi-party unique set-disjointness with fixed set sizes (\Cref{thm:intro:udisj}) in \Cref{sec:udisj}. We design an efficient reduction using interval systems (\Cref{lemma:intro:planted-interval-lb,lemma:intro:planted-rand-interval-lb}) in \Cref{sec:interval}. We combine both to prove our lower bound for the needle problem (\Cref{thm:intro:needle}) in \Cref{sec:needle}. We discuss open problems in \Cref{sec:open}.

\section{Lower bounds for asymmetric unique set-disjointness}
\label{sec:udisj}

We prove \Cref{thm:intro:udisj} in this section. First, we recall some definitions and fix some notations.

\paragraph{Notations.}
it will be convenient to identify sets with their indicator vectors; thus, we identify $X \in \bits^n$ with the set $\{i: X_i=1\} \subset [n]$. Let $k \ge 2$ denote the number of players. The players inputs are $X=(X_1,\ldots,X_k)$, where $X_i=(X_i(1),\ldots,X_i(n)) \in \bits^n$. It will be convenient to also define $X^j=(X_1(j),\ldots,X_k(j)) \in \bits^k$, the $j$-th coordinate for all the players for $j \in [n]$. In this section use boldface to denote random variables (such as $\rX, \rW$) to help distinguish them from non-random variables.

\paragraph{Protocols.} Let $\Pi$ be a protocol. Given an input $X$, we denote by $\Pi(X)$ the transcript of running $\Pi$ on $X$. We assume that every transcript also has an output value which is a bit determined by the transcript (for example, the last bit sent). A protocol solves a decision problem under input distribution $\nu$ with error $\delta$, if it outputs the correct answer with probability at least $1-\delta$ when the inputs are sampled from $\nu$. We will prove lower bounds on protocols that solve unique set-disjointness under a number of input distributions. As such, we may assume unless otherwise specified that the protocols are deterministic.

Finally, recall that we call $k$-party protocol $\Pi$ is called $[c_1,\ldots,c_k]$-bounded if in any transcript of $\Pi$, the $i$-th player sends at most $c_i$ bits.

\paragraph{multi-party unique set-disjointness.}
The $k$-party unique set-disjointness problem is defined on inputs coming from two promise sets:
\begin{itemize}
    \item \textbf{Disjoint}: $\F{0} = \{X \in (\bits^n)^k: \forall j \in [n], |X^j| \le 1\}$,
    \item \textbf{Unique intersection}: $\F{1} = \{X \in (\bits^n)^k: \exists j \in [n], |X^j|=k, \forall j' \ne j, |X^{j'}| \le 1\}$.
\end{itemize}

Towards proving \Cref{thm:intro:udisj}, our first step is to consider unique set-disjointness under product distribution which assign weight asymmetrically between the players.

\subsection{Lower bounds for product asymmetric distributions}

Let $\nu$ be a distribution over $[k]$. We denote by $\nu^n$ the distribution over $\rW \in [k]^n$, where we sample $\rW_j \sim \nu$ independently for all $j \in [n]$. 
We define two distributions, $\disprob{0}{\nu}$ supported on $\F{0}$ and $\disprob{1}{\nu}$ supported on $\F{1}$. 

\begin{definition}[Disjoint asymmetric distribution]
\label{def:prob0}
Let $\rX \in (\bits^n)^k$ be sampled as follows:
\begin{enumerate}
    \item Sample $\rW \sim \nu^n$.
    \item For each $j \in [n]$, if $\rW_j=i$ then we sample $\rX_i(j) \in \bits$ uniformly, and set $\rX_{i'}(j)=0$ for all $i' \ne i$.
\end{enumerate}
We denote by $\disprob{0}{\nu}$ the marginal distribution of $\rX$, and note that it is supported on $\F{0}$.
\end{definition}

\begin{definition}[Unique intersection asymmetric distribution]
\label{def:prob1}
Let $\rY \in (\bits^n)^k$ be sampled as follows:
\begin{enumerate}
    \item Sample $\rX \sim \disprob{0}{\nu}$.
    \item Sample $\rj \in [n]$ uniformly.
    \item If $\rj=j$ then we set $\rY^j=1^k$ and $\rY^{j'}=\rX^{j'}$ for all $j' \ne j$.
\end{enumerate}
We denote by $\disprob{1}{\nu}$ the marginal distribution of $\rY$, and note that it is supported on $\F{1}$.
\end{definition}

We denote by $\disprob{}{\nu}$ the mixture distribution, where we sample $\rb \in \bits$ uniformly, and then sample $\rX \sim \disprob{\rb}{\nu}$. Our main technical result is a communication lower bound on protocols which solve unique set-disjointness under input distribution $\disprob{}{\nu}$. We will later reduce the fixed set size case to this model.

\begin{theorem}
\label{thm:udisj_prob}
Fix $n, k \ge 1$. Let $\nu$ be a distribution on $[k]$.
Let $\Pi$ be a $[c_1,\ldots,c_k]$-bounded $k$-party deterministic protocol which solves the unique set-disjointness problem under input distribution $\disprob{}{\nu}$ with error $2\%$. Then
$$
\sum_{i \in [k]} \frac{c_i}{\nu(i)} = \Omega(n).
$$
\end{theorem}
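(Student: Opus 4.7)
The plan is to prove \Cref{thm:udisj_prob} via a Hellinger-distance / information-complexity argument, generalizing the standard template for multi-party set-disjointness lower bounds to the asymmetric weighting $\nu$. Let $\tau_b$ denote the distribution of the transcript $\Pi(\rX)$ under input distribution $\disprob{b}{\nu}$. Since $\Pi$ errs with probability at most $2\%$ under $\disprob{}{\nu} = \frac{1}{2}\disprob{0}{\nu} + \frac{1}{2}\disprob{1}{\nu}$, we have $\mathrm{TV}(\tau_0,\tau_1) \ge 0.96$ and hence $h^2(\tau_0,\tau_1) = \Omega(1)$.

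Next I decompose the planted distribution over the planted coordinate. Writing $\mu^{1,j}$ for the law that samples from $\disprob{0}{\nu}$ and then overwrites $\rX^j$ with $1^k$, we have $\disprob{1}{\nu} = \frac{1}{n}\sum_{j\in[n]} \mu^{1,j}$, so by convexity of squared Hellinger distance in its second argument,
\[
\sum_{j\in[n]} h^2(\tau_0, \tau_{1,j}) \;\ge\; n \cdot h^2(\tau_0,\tau_1) \;=\; \Omega(n),
\]
where $\tau_{1,j}$ is the transcript distribution under $\mu^{1,j}$. The core of the proof is a per-coordinate upper bound on $h^2(\tau_0, \tau_{1,j})$ that matches this $\Omega(n)$ lower bound against $\sum_i c_i/\nu(i)$. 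Concretely, $\disprob{0}{\nu}$ and $\mu^{1,j}$ agree outside coordinate $j$; conditioning on $\rW$ makes the players' inputs independent across players, so the rectangle property of blackboard protocols applies, and combined with the chain rule of KL along the protocol's turns it yields a decomposition
\[
h^2(\tau_0, \tau_{1,j}) \;\le\; \sum_{i\in[k]} \frac{\Delta_i(j)}{\nu(i)},
\]
where $\Delta_i(j)$ is a KL-type sensitivity of player $i$'s messages to the modification of $\rX_i(j)$, and the $1/\nu(i)$ weighting arises because player $i$'s coordinate-$j$ marginal under $\disprob{0}{\nu}$ is $\mathrm{Bern}(\nu(i)/2)$, so each contribution must be amortized against this sparsity. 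Summing over $j$ and using that $\sum_j \Delta_i(j) \le O(c_i)$ (player $i$ sends at most $c_i$ bits, which bound the total information their messages carry about the independent coordinates of $\rX_i$) gives
\[
\Omega(n) \;=\; \sum_{j\in[n]} h^2(\tau_0, \tau_{1,j}) \;\le\; \sum_{i\in[k]} \frac{1}{\nu(i)} \sum_{j\in[n]} \Delta_i(j) \;\le\; O\!\left(\sum_{i\in[k]} \frac{c_i}{\nu(i)}\right),
\]
which is the desired inequality. As a sanity check, in the symmetric case $\nu(i) = 1/k$ this recovers $\sum_i c_i = \Omega(n/k)$, matching \Cref{thm:intro:udist-classic}.

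The hard part is establishing the asymmetric per-coordinate Hellinger bound with the correct $1/\nu(i)$ scaling. The Jayram-style single-coordinate Hellinger lemmas for multi-party $\mathrm{AND}$ are tailored to the symmetric hard distribution $\nu(i) = 1/k$ and output a uniform $1/k$ amortization. Extending them to the asymmetric single-coordinate primitive --- where the null coordinate distribution is $X^j \in \{0, e_1, \ldots, e_k\}$ with probabilities $1/2, \nu(1)/2, \ldots, \nu(k)/2$ and the planted case is $X^j = 1^k$ --- requires a refined cut-and-paste / Hellinger-triangle argument that outputs exactly the $\nu(i)$-weighted factor. I expect this to be the key technical contribution of the section, and the aggregation over coordinates via the chain rule and the product structure of $\disprob{0}{\nu}$ to then follow from the standard direct-sum toolkit.
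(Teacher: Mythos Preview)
Your high-level plan is sound --- it is the Jayram-style Hellinger route --- but you have left the crux as an expectation rather than a proof: the asymmetric single-coordinate bound $h^2(\tau_0,\tau_{1,j}) \le \sum_i \Delta_i(j)/\nu(i)$ with the correct $1/\nu(i)$ scaling is exactly what needs to be established, and you acknowledge this is the ``key technical contribution'' without supplying it. The paper's proof takes a different route that sidesteps this step entirely.

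The paper works with mutual information rather than Hellinger distance. Via the chain rule over the protocol's turns and the conditional independence of $\rX_1,\ldots,\rX_k$ given $\rW$, it shows $c_i \ge I(\rX_i : \rpi \mid \rW) \ge \sum_j I(\rX_i(j) : \rpi \mid \rW)$; this parallels your $\sum_j \Delta_i(j) \le O(c_i)$. The key insight you are missing comes next: since $\rX_i(j)$ is deterministically $0$ unless $\rW_j = i$, and $\Pr[\rW_j=i]=\nu(i)$, one has
\[
\frac{1}{\nu(i)}\, I(\rX_i(j):\rpi \mid \rW) \;=\; I(\rX_i(j):\rpi \mid \rW_j=i,\ \rW_{-j}).
\]
So the $1/\nu(i)$ factor is not something to be earned by a new cut-and-paste lemma; it is exactly the cost of conditioning on $\rW_j=i$. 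After this conditioning the quantity no longer depends on $\nu$: via the standard embedding of $k$-bit $\mathrm{AND}$ into coordinate $j$, it equals $I(\rb:\Lambda(e_i[\rb],\rR)\mid \rR)$ for an $\mathrm{AND}$-protocol $\Lambda$ with small error on $0^k$ and $1^k$. Summing over $i$ gives $k\cdot \text{CIC}(\Lambda)$ with respect to the \emph{uniform} choice of $\re\in[k]$, and the existing Gronemeier/Jayram bound $\text{CIC}(\Lambda)=\Omega(1/k)$ finishes the proof. In short, the paper reduces the asymmetric problem to the already-solved symmetric $\mathrm{AND}$ lower bound, whereas your proposal would require proving a genuinely new asymmetric single-coordinate lemma that you have not supplied.
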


We note that \Cref{thm:udisj_prob} is a generalization of the lower bound for symmetric case \cite{gronemeier2009asymptotically,jayram2009hellinger}, where $\nu(i)=1/k$ for all $i\in[k]$. In this case \Cref{thm:udisj_prob} gives that $\sum_{i} c_i = \Omega(n/k)$.

\subsubsection{Information theory framework}
We will use information theory to prove \Cref{thm:udisj_prob}. Although we assume that $\Pi$ has small error with respect to both $\disprob{0}{\nu}$ and $\disprob{1}{\nu}$, we will only study its information complexity with respect to $\disprob{0}{\nu}$. Below we let $\rW \in [k]^n, \rX \in (\bits)^n$ be jointly samples as in \Cref{def:prob0}. The following observation will play an important role.
    
\begin{observation}
\label{obs:independent}
Conditioned on $\rW=W$, the random variables $(\rX_i(j): i \in [k], j \in [n])$ are independent.
\end{observation}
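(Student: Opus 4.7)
The plan is to read off independence directly from the sampling description of $\disprob{0}{\nu}$ in \Cref{def:prob0}. After conditioning on $\rW = W$, the only remaining randomness in the construction is the coin flips used in step 2: for each coordinate $j \in [n]$ one fresh uniform bit $\rb_j \in \bits$ is drawn and used to set $\rX_{W_j}(j) = \rb_j$, while $\rX_i(j) = 0$ deterministically for every $i \ne W_j$. The bits $\rb_1,\ldots,\rb_n$ are independent by construction (the sampling loops independently over $j$). Hence, conditional on $\rW = W$, each variable $\rX_i(j)$ is either a constant or equals exactly one of the $\rb_j$'s, and the non-constant variables $\{\rX_{W_j}(j): j \in [n]\}$ depend on disjoint underlying bits.

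Independence of the entire family $(\rX_i(j): i \in [k], j \in [n])$ then follows from the standard principle that functions of disjoint collections of independent random variables are independent, combined with the fact that constants are independent of everything. As a sanity check one can verify the factorization directly: for any target assignment $(x_i(j))_{i,j}$, the conditional joint probability $\Pr[\rX_i(j) = x_i(j) \ \forall i,j \mid \rW = W]$ vanishes unless $x_i(j) = 0$ for every pair with $i \ne W_j$, and otherwise equals $2^{-n}$; this coincides with the product of marginals, which contributes a factor $1/2$ from $\rX_{W_j}(j)$ at each coordinate $j$ and factors of $1$ from the deterministic zero entries. There is no real obstacle here, since the observation merely crystallizes the separable product structure of the randomness used to build $\rX$ from $\rW$; it is recorded now so that the forthcoming information-theoretic argument can decompose the analysis coordinate by coordinate.
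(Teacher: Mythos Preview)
Your argument is correct: conditioned on $\rW=W$, each column $j$ contributes one fresh uniform bit $\rX_{W_j}(j)$ and zeros elsewhere, and since these bits are drawn independently across $j$, the full family factors as you describe. The paper itself states this as a bare observation with no accompanying proof, so there is no alternative approach to compare against; your explicit verification via the joint-probability computation is more than the paper provides and is entirely in line with how the observation is meant to be read off from \Cref{def:prob0}.
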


We start by giving a general bound for individual communication based on information theory, which assumes only the existence of such $\rW$ under which $\rX_1,\ldots,\rX_k$ are independent.

\begin{lemma}
\label{lemma:asys_info_bits}
Let $\Pi$ be a $k$-party protocol which is $[c_1,\ldots,c_k]$-bounded.
Assume joint random variables $(\rW, \rX)$, where $\rX=(\rX_1,\ldots,\rX_k)$ are the players inputs, and such that for every value $W$ for $\rW$, the random variables $\rX_1|\rW=W,\ldots,\rX_k|\rW=W$ are independent. Then for each $i \in [k]$ we have
$$
c_i \ge I(\rX_i: \Pi(\rX) |\rW).
$$
\end{lemma}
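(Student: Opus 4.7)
The plan is to run the standard round-by-round information-theoretic decomposition of the transcript, using the conditional independence of $\rX_1,\ldots,\rX_k$ given $\rW$ as the key structural ingredient. Write $\Pi(\rX)=(M_1,M_2,\ldots,M_T)$ where $M_t$ is the message sent in round $t$, and let $p_t\in[k]$ denote the player who sends $M_t$ (this identity, as well as $|M_t|$, is determined by the prefix $M_{<t}$). By the chain rule for mutual information,
$$
I(\rX_i:\Pi(\rX)\mid \rW)=\sum_{t}I(\rX_i:M_t\mid M_{<t},\rW).
$$

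The central lemma I would prove (or invoke as a standard ``rectangle property'' of communication protocols) is that for every fixed $W$ and every fixed prefix $m_{<t}$, the conditional distribution of $(\rX_1,\ldots,\rX_k)$ given $\rW=W$ and $M_{<t}=m_{<t}$ remains a product distribution. This is because the preimage of the prefix $m_{<t}$ in input space is a combinatorial rectangle (the message in each round is a function of one player's input together with the previous transcript), so conditioning on the prefix can only reweight each player's marginal independently. Consequently, $\rX_1,\ldots,\rX_k$ stay mutually independent given $(M_{<t},\rW)$ throughout the protocol.

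With this in hand, I split the sum into the two types of rounds. If $p_t\neq i$, then $M_t$ is a deterministic function of $\rX_{p_t}$ and $M_{<t}$, and by the rectangle property $\rX_{p_t}$ is independent of $\rX_i$ given $(M_{<t},\rW)$, so $I(\rX_i:M_t\mid M_{<t},\rW)=0$. If instead $p_t=i$, I use the crude bound $I(\rX_i:M_t\mid M_{<t},\rW)\le H(M_t\mid M_{<t},\rW)\le |M_t|$, where $|M_t|$ denotes the number of bits in $M_t$ (again determined by $M_{<t}$). Combining these,
$$
I(\rX_i:\Pi(\rX)\mid\rW)\le \sum_{t:\, p_t=i}|M_t|.
$$
The $[c_1,\ldots,c_k]$-boundedness hypothesis says that on every transcript the right-hand sum is at most $c_i$, so taking expectation preserves the bound and yields $I(\rX_i:\Pi(\rX)\mid\rW)\le c_i$.

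The only conceptually nontrivial step is justifying the rectangle property rigorously, particularly because the speaker $p_t$ and the length $|M_t|$ depend on $M_{<t}$; this is the step I would be most careful with, writing it as an induction on $t$ that simultaneously shows conditional independence of the inputs given $(M_{<t},\rW)$ and that $M_t$ is a function of a single $\rX_{p_t}$ and $M_{<t}$. Everything else is a clean application of the chain rule and the basic inequality $I(\cdot:\cdot)\le H(\cdot)\le \text{(number of bits)}$.
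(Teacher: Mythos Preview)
Your proposal is correct and is essentially the paper's own argument: chain rule on the transcript, zero contribution from rounds where another player speaks (via the rectangle property preserving conditional independence given $\rW$), and the crude entropy bound $I\le H\le\text{bits sent}$ for rounds where player $i$ speaks. The only cosmetic difference is that the paper decomposes the transcript bit-by-bit rather than message-by-message, and it invokes the independence of $\rX_i,\rX_{i'}$ given $\rW=W$ somewhat tersely at exactly the spot where you (rightly) flag that the rectangle property is doing the work.
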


\begin{proof}
We first set up some notations. We denote by $\pi$ a possible transcript for $\Pi$, and let $\pi_{<t}=(\pi_1,\ldots,\pi_{t-1})$ be a partial transcript. We let $\rpi=\Pi(\rX)$ denote the transcript when the protocol is run on $\rX$.

Fix a time step $t$ in the protocol, and a partial transcript $\pi_{<t}$. The next player to speak is determined by the transcript so far, so denote it by $\next(\pi_{<t})\in [k]$. We also denote by $\locs(\pi,i)=\{t:\next(\pi_{<t})=i\}$ the locations in transcript $\pi$ where player $i$ sent a bit. By our assumption $|\locs(\pi,i)|\le c_i$ for any transcript $\pi$.

Consider any value $W$ for $\rW$. Observe that conditioned on $\rpi_{<t}=\pi_{<t}$, the next bit sent $\rpi_t$ is a function of $\rX_i$ for $i=\next(\pi_{<t})$. If $i' \ne i$ then since $\rX_i|\rW=W, \rX_{i'}|\rW=W$ are independent we have
$$
I(\rX_{i'}:\rpi_t|\rW=W, \rpi_{<t}=\pi_{<t}) = 0.
$$
Since $\rpi_t \in \bits$, we can also trivially bound
$$
I(\rX_i:\rpi_t|\rW=W, \rpi_{<t}=\pi_{<t}) \le 1.
$$
Averaging over $\pi_{<t}$ and $W$ gives
$$
I(\rX_i:\rpi_t|\rW, \rpi_{<t}) \le \Pr[\next(\rpi_{<t})=i].
$$
Summing over $t$ then gives the result:
$$
I(\rX_i:\rpi|\rW) = \sum_t I(\rX_i:\rpi_t|\rW, \rpi_{<t}) = \E |\locs(\rpi, i)| \le c_i.
$$
\end{proof}

We shorthand $\rpi=\Pi(\rX)$ below.
Using \Cref{lemma:asys_info_bits},  \Cref{obs:independent} and the data processing inequality\footnote{If $\rx,\ry,\rz$ are random variables, where $\rx,\ry$ are independent, then  $I(\rx \ry:\rz) \ge I(\rx:\rz)+I(\ry:\rz)$.} give
$$
c_i \ge I(\rX_i:\rpi|\rW) \ge \sum_{j \in [n]} I(\rX_i(j): \rpi|\rW).
$$
Towards proving \Cref{thm:udisj_prob}, consider the expression
$$
\sum_{i \in [k]} \frac{c_i}{\nu(i)} \ge
\sum_{i \in [k]} \frac{1}{\nu(i)} I(\rX_i:\rpi|\rW) \ge
\sum_{i \in [k]} \frac{1}{\nu(i)} \sum_{j \in [n]} I(\rX_i(j): \rpi|\rW)
$$
We define below
$$
L := \frac{1}{n} \sum_{i \in [k]} \frac{1}{\nu(i)} \sum_{j \in [n]} I(\rX_i(j): \rpi|\rW)
$$
The following lemma thus proves \Cref{thm:udisj_prob}.
\begin{lemma}
\label{lemma:info_mixed_lb}
$L = \Omega(1)$.
\end{lemma}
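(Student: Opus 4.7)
The plan is to lower bound $L$ by connecting it to the distinguishing error of $\Pi$ through the standard Bar-Yossef--Jayram information-theoretic framework for set-disjointness, adapted to the asymmetric setting. The key identity we rely on is that, by \Cref{obs:independent}, the bit $\rX_i(j)$ is deterministically $0$ under $\disprob{0}{\nu}$ whenever $\rW_j \neq i$; hence
$$
I(\rX_i(j) : \rpi | \rW) \;=\; \nu(i) \cdot I(\rX_i(j) : \rpi | \rW, \rW_j = i) \;=:\; \nu(i) \cdot J_{ij},
$$
so that $L = \frac{1}{n}\sum_{i,j} J_{ij}$ and the weighting by $1/\nu(i)$ has disappeared. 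It thus suffices to prove $\sum_{i,j} J_{ij} = \Omega(n)$.

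\textbf{Step 1 (error to per-column Hellinger).} The error assumption gives $\delta_{TV}(\rpi|\disprob{0}{\nu},\ \rpi|\disprob{1}{\nu}) = \Omega(1)$. Writing $\disprob{1}{\nu} = \frac{1}{n}\sum_{j \in [n]} \mu_j$, where $\mu_j$ is the distribution obtained from $\disprob{0}{\nu}$ by overwriting the $j$-th column with $1^k$, convexity of $\delta_{TV}$ followed by Cauchy--Schwarz gives $\sum_{j} \delta_{TV}^2(\rpi|\disprob{0}{\nu},\ \rpi|\mu_j) = \Omega(n)$. Using $h^2 \geq \delta_{TV}^2/2$ and the joint convexity of $h^2$ to pull an expectation over $\rW$ inside, we obtain
$$
\sum_{j \in [n]} \mathbb{E}_{\rW}\!\bigl[h^2\bigl(\rpi\,|\,\disprob{0}{\nu},\rW,\ \rpi\,|\,\mu_j,\rW\bigr)\bigr] \;=\; \Omega(n).
$$

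\textbf{Step 2 (per-column hybrid over players).} Fix $j$ and $W$. Going from $\disprob{0}{\nu}|_W$ to $\mu_j|_W$ amounts to forcing $\rX_i(j)=1$ for every $i\in[k]$. Interpolate by hybrids $H_0,H_1,\dots,H_k$ where $H_i$ fixes $\rX_{i'}(j)=1$ for all $i' \le i$ and samples the rest from $\disprob{0}{\nu}|_W$. The triangle inequality for Hellinger distance gives $h(H_0,H_k) \leq \sum_{i\in[k]} h_{i,j,W}$, where $h_{i,j,W} := h(\rpi|H_{i-1},\,\rpi|H_i)$. A weighted Cauchy--Schwarz, using that $\sum_i \nu(i)=1$, then yields
$$
h^2(H_0,H_k) \;\leq\; \sum_{i\in[k]} \frac{h_{i,j,W}^2}{\nu(i)}.
$$

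\textbf{Step 3 (per-player Hellinger to information).} Each hybrid step changes only the distribution of the single bit $\rX_i(j)$, and by \Cref{obs:independent} the other coordinates remain independent. The crux is the bound
$$
\mathbb{E}_{\rW}\!\bigl[h_{i,j,W}^2\bigr] \;\leq\; C\,\nu(i)\cdot J_{ij}.
$$
The factor $\nu(i)$ is the key point: under $\disprob{0}{\nu}|_W$, the bit $\rX_i(j)$ is nontrivially random (uniform on $\bits$) only on the event $\rW_j = i$, which has probability $\nu(i)$; on the complement it is deterministic. Exploiting the conditional independence structure, one can decompose the input-level Hellinger into a contribution from the "random" case, which is controlled by the usual Hellinger--KL inequality and the Pinsker-type bound for Bernoulli-uniform variables, and a contribution from the "deterministic" case, which amortizes correctly against the transcript's information content on account of the protocol only being able to "notice" player $i$'s bit by receiving communication from player~$i$. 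Combining,
$$
\Omega(n) \;\leq\; \sum_{j} \mathbb{E}_\rW\!\bigl[h^2(H_0,H_k)\bigr] \;\leq\; \sum_{i,j} \frac{\mathbb{E}_\rW[h_{i,j,W}^2]}{\nu(i)} \;\leq\; C \sum_{i,j} J_{ij} \;=\; Cn\cdot L,
$$
giving $L=\Omega(1)$.

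\textbf{Main obstacle.} The hard part is Step 3, in particular showing that the "deterministic" case $\rW_j \ne i$ does not contribute more than $O(\nu(i)\,J_{ij})$ to $\mathbb{E}_\rW[h_{i,j,W}^2]$. A naive data-processing bound is catastrophically loose here (the input-level Hellinger is $2$ on this event). The resolution must exploit the fact that the protocol's dependence on $\rX_i(j)$ is entirely mediated by player $i$'s $c_i$ bits of communication, together with a Bhattacharyya-coefficient factorization on the transcript tree. A clean way to do this may be to replace the single-bit hybrid by a two-stage hybrid that first conditions on $\rW_j = i$ (paying a factor of $\nu(i)$) and then flips $\rX_i(j)$; the weighted Cauchy--Schwarz then allocates the cost correctly across players.
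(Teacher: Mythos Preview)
Your reduction $L=\frac{1}{n}\sum_{i,j}J_{ij}$ is correct and is essentially what the paper establishes in \Cref{lemma:pi_to_lambda}. Steps 1 and 2 are standard. The real issue is Step 3, and you have correctly located the obstacle---but you have not overcome it, and the inequality you write there is in fact false.

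Concretely, take the protocol in which player $i$ simply broadcasts $\rX_i(j)$. Then $J_{ij}=1$, so $\nu(i)J_{ij}=\nu(i)$. But for any $W$ with $W_j\neq i$ the hybrid step $H_{i-1}\to H_i$ flips a deterministic $0$ to a deterministic $1$ in player $i$'s input, the transcript changes with certainty, and $h_{i,j,W}^2=2$. Hence $\mathbb{E}_\rW[h_{i,j,W}^2]\ge 2(1-\nu(i))$, which is not $O(\nu(i))$ when $\nu(i)$ is small. So the displayed bound $\mathbb{E}_\rW[h_{i,j,W}^2]\le C\,\nu(i)J_{ij}$ cannot hold, and neither of your suggested fixes helps: the fact that player $i$'s influence is mediated by $c_i$ bits is already fully spent in \Cref{lemma:asys_info_bits} and gives no further leverage on this Hellinger term; and ``first conditioning on $\rW_j=i$'' is not a legitimate hybrid move, since it changes the measure rather than the input. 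The linear hybrid path $H_0\to H_1\to\cdots\to H_k$ is the wrong geometry here---what is actually needed is the cut-and-paste/rectangle property of transcripts for Hellinger distance, which lets one compare $\Pi$ on inputs that differ in several players simultaneously.

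The paper sidesteps this entirely. After the identity $L=\sum_i I(\rb:\Lambda(e_i[\rb],\rR)\mid\rR)$ (your Step 0, their \Cref{lemma:pi_to_lambda}), it observes that this sum is exactly $k\cdot\text{CIC}(\Lambda)$ for the $k$-bit AND protocol $\Lambda$, and then invokes the Gronemeier--Jayram lower bound $\text{CIC}(\Lambda)=\Omega(1/k)$ as a black box (\Cref{thm:gronemeir}). That black box is precisely where the delicate Hellinger/cut-and-paste argument lives; the paper does not redo it. Your proposal is effectively an attempt to reprove that black box inline, and the place where you get stuck is exactly the nontrivial core of Gronemeier's argument.
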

We prove \Cref{lemma:info_mixed_lb} in the next subsection, via a reduction to protocols for the $k$-bit AND function.

\subsubsection{Reduction to the information complexity of the AND function}

In this section, we consider the $k$-bit AND function and its information complexity. Let $\Lambda$ be a $k$-party protocol for it: each of the $k$ players receive as input a bit, and their goal is to compute their AND. Namely, to check if they are all equal to $1$.

Let $\rb \in \bits$ be a random bit. For $i \in [k]$, let $e_i[\rb] \in \{0,1\}^k$ denote the vector with $\rb$ at coordinate $i$ and $0$ everywhere else. The following lemma reduces proving \Cref{lemma:info_mixed_lb} to analyzing the information of protocols for $k$-bit AND which make small error on only two inputs: the all-zero and all-one inputs.

\begin{lemma}
\label{lemma:pi_to_lambda}
There is a public-randomness $k$-party protocol $\Lambda$ for the $k$-bit AND function, using public-randomness $\rR$, with the following guarantees:
\begin{enumerate}
    \item $\Lambda$ has error at most $8\%$ with respect to the inputs $0^k$ and $1^k$.
    \item $L = \sum_{i \in [k]} I(\rb, \Lambda(e_i[\rb], \rR) | \rR)$.
\end{enumerate}
\end{lemma}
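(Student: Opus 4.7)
The plan is a direct-sum style reduction that embeds the AND input into a uniformly random coordinate of an instance drawn (essentially) from $\disprob{0}{\nu}$. Concretely, I would use public randomness to sample $\rR = (\rW, \rj)$ with $\rW \sim \nu^n$ and $\rj \in [n]$ uniform; on AND input $(a_1, \ldots, a_k)$, each player $i'$ uses \emph{private} randomness to sample $\rX_{i'}(j)$ for $j \ne \rj$ according to the $\disprob{0}{\nu}$ marginal given $\rW_j$ (a uniform bit if $\rW_j = i'$ and $0$ otherwise), and sets $\rX_{i'}(\rj) = a_{i'}$. The players then simulate $\Pi$ on $\rX$ and output its answer.

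For the error bound, on input $1^k$ the law of $\rX$ is exactly $\disprob{1}{\nu}$ by Definition \ref{def:prob1}, so $\Pi$ (which errs on the mixture $\disprob{}{\nu}$ with probability at most $2\%$) errs with probability at most $4\%$. On input $0^k$ the law of $\rX$ equals $\disprob{0}{\nu}$ conditioned on $\{\rX^{\rj} = 0^k\}$, an event of probability $1/2$ under $\disprob{0}{\nu}$ (the sampled bit is $0$), so $\Pi$'s error at most doubles to $8\%$.

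The heart of the proof is the information identity. Fixing $i$ and conditioning on $\rj = j_0$, I would establish
$$
I(\rb : \Pi(\rX) \mid \rW, \rj = j_0) \;=\; I(\rX_i(j_0) : \rpi \mid \rW, \rW_{j_0} = i),
$$
where the right side refers to the original setup with $\rX \sim \disprob{0}{\nu}$ and $\rpi = \Pi(\rX)$. The key observation is that in the simulation, given $\rW$ and $\rj = j_0$, the column $\rX^{j_0} = e_i[\rb]$ is driven by the input alone while $\rX^{-j_0}$ depends only on $\rW^{-j_0}$. Hence the conditional joint law of $(\rb, \Pi(\rX))$ does not depend on $\rW_{j_0}$, so $I(\rb : \Pi(\rX) \mid \rW, \rj = j_0, \rW_{j_0} = i')$ equals a common value $V$ for every $i' \in [k]$, and in particular the $\nu$-average over $i'$ equals the left side. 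Picking $i' = i$ makes the simulation and the original agree (with $\rb$ playing the role of $\rX_i(j_0)$), so $V$ equals the right side. Averaging over $j_0 \in [n]$ and summing over $i \in [k]$ then recovers $L$.

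The main obstacle is precisely this identity, and in particular the choice to sample the non-$\rj$ columns using \emph{private} randomness rather than including them in $\rR$. If those bits were public, the conditioning in $I(\rb : \Lambda(e_i[\rb], \rR) \mid \rR)$ would also condition on $\rX^{-j_0}$, which — since $\rX_i(j_0) \perp \rX^{-j_0} \mid \rW$ by Observation \ref{obs:independent} — could only increase the mutual information and yield an inequality $\sum_i I(\rb : \Lambda \mid \rR) \ge L$ rather than the exact equality asserted by the lemma. Arranging the randomness so that the per-coordinate information cost is preserved exactly, and not merely lower-bounded, is the step I would take the most care with.
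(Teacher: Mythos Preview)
Your proposal is correct and follows essentially the same direct-sum reduction as the paper. The only presentational difference is in the packaging of the public randomness: the paper takes $\rR=(\rj,\rW_{-\rj})$ (omitting $\rW_{\rj}$) and instead introduces an auxiliary pair $(\rd,\rU)$ with $\rd\sim\nu$ to prove first that $I(\rU_i:\Lambda(\rU,\rR)\mid\rd,\rR)=\frac{1}{n}\sum_j I(\rX_i(j):\rpi\mid\rW)$, and then observes that only the event $\rd=i$ contributes, which cancels the $1/\nu(i)$ factor. You instead include $\rW_{\rj}$ in $\rR$ and argue it is a dummy variable since the embedded column $\rX^{\rj}=e_i[\rb]$ is set regardless of $\rW_{\rj}$; picking $\rW_{\rj}=i$ then aligns the simulated and original distributions. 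These are the same idea, and your remark about using \emph{private} randomness for the off-column bits (to obtain equality rather than merely $\ge L$) is exactly the point the paper relies on as well.
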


We prove \Cref{lemma:pi_to_lambda} in the remainder of this subsection. First, let $\rd \in [k], \rU \in \bits^k$ be jointly sampled as follows:
\begin{enumerate}
    \item Sample $\rd \in [k]$ according to $\nu$.
    \item Given $\rd=d$, sample $\rU_d \in \bits$ uniformly and set $\rU_i=0$ for all $i \ne d$.
\end{enumerate}
Let $\sigma=\sigma(\nu)$ denote the marginal distribution of $\rU$, and observe that it is the same as that of $\rX^j$ for any $j \in [n]$. In fact, the joint distribution of $(\rd, \rU)$ is the same as $(\rW_j, \rX^j)$ for any $j$. The next claim uses this to extract a protocol $\Lambda$ for $k$-bit AND from $\Pi$, such that it has related information complexity measures, and a small error with respect to the inputs $0^k$ and $1^k$.

\begin{claim}
\label{claim:pi_to_lambda1}
There is a (public randomness) $k$-party protocol $\Lambda$ for the $k$-bit AND function, using public randomness $\rR$, with the following properties:
\begin{enumerate}
    \item $\Lambda$ has error at most $8\%$ with respect to the inputs $0^k$ and $1^k$.
    \item $I(\rU_i:\Lambda(\rU, \rR)|\rd,\rR) =  \frac{1}{n} \sum_{j=1}^n I(\rX_i(j):\rpi|\rW)$ for all $i \in [k]$.
\end{enumerate}
\end{claim}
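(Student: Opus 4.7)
The plan is to construct $\Lambda$ by having the players embed their input $\rU$ as the $\rj$-th column of a full input $\rX$ to $\Pi$, with $\rj \in [n]$ chosen uniformly through public randomness. Concretely, let the public randomness $\rR$ consist of a uniform $\rj \in [n]$ together with $\rW_{j'} \sim \nu$ drawn independently for each $j' \ne \rj$. Each player $i$ then uses private randomness to sample $\rX_i(j')$ for $j' \ne \rj$, taking it to be a uniform bit if $\rW_{j'} = i$ and $0$ otherwise, and sets $\rX_i(\rj) = \rU_i$. The players jointly simulate $\Pi$ on $\rX$ and output its answer. Coupling $\rd := \rW_{\rj}$ and using that $\sigma(\nu)$ is precisely the distribution of each column of $\disprob{0}{\nu}$, the resulting joint distribution of $(\rW, \rX)$ is exactly $\disprob{0}{\nu}$, with $\rj$ uniform and independent of $(\rW, \rX)$.

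To establish property (1), first fix $\rU = 0^k$. Then the distribution of $\rX$ produced by $\Lambda$ coincides with $\disprob{0}{\nu}$ conditioned on $\rX^{\rj} = 0^k$, an event of probability $1/2$. The hypothesis that $\Pi$ has error at most $2\%$ on the mixture $\disprob{}{\nu}$ implies error at most $4\%$ on $\disprob{0}{\nu}$, so conditioning on a probability-$1/2$ event blows the error up to at most $8\%$. Next fix $\rU = 1^k$; the distribution of $\rX$ is then $\disprob{1}{\nu}$ conditioned on the intersection coordinate being $\rj$. Averaging over the uniform $\rj$ precisely recovers $\disprob{1}{\nu}$, on which $\Pi$ errs with probability at most $4\%$. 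In both cases the error is at most $8\%$.

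For property (2), observe that conditioning on $(\rd, \rR) = (\rW_{\rj}, \rj, \rW_{-\rj})$ is the same as conditioning on $(\rj, \rW)$, and that $\Lambda(\rU, \rR)$ equals $\rpi = \Pi(\rX)$ after marginalizing the private randomness. By \Cref{obs:independent}, even after the private sampling step the conditional distribution of $\rX$ given $(\rj, \rW)$ is the correct product distribution. Hence
\begin{align*}
I(\rU_i : \Lambda(\rU, \rR) | \rd, \rR) = I(\rX_i(\rj) : \rpi | \rW, \rj) = \frac{1}{n} \sum_{j=1}^n I(\rX_i(j) : \rpi | \rW),
\end{align*}
where the last equality uses that $\rj$ is uniform and independent of $(\rW, \rX)$.

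The main subtlety is the error accounting: conditioning on $\rX^{\rj} = 0^k$ only doubles the error because this event has constant probability, whereas a naive reduction that planted the needle at a fixed position would lose a factor of $n$. Averaging over uniform $\rj$ restores the full distribution $\disprob{1}{\nu}$ for free, which is what makes the constant-error bound possible on both hard inputs of $k$-bit AND.
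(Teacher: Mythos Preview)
Your construction and argument are correct and match the paper's proof essentially line by line: the same embedding of $\rU$ into a random column $\rj$ with the remaining columns generated from $\rW_{-\rj}$, the same error accounting (the paper phrases the $0^k$ case via $\sigma(0^k)=1/2$ rather than ``conditioning on a probability-$1/2$ event,'' but it is the identical computation), and the same information identity via the coupling $\rd=\rW_{\rj}$ followed by averaging over the uniform $\rj$. Nothing substantive differs.
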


\begin{proof}
We first define the protocol $\Lambda$. Let $U \in \bits^k$ denote the input for the AND function. First, using public randomness, sample $\rj \in [n]$ uniformly; then sample $\rW_{-\rj}=(\rW_{j'}: j' \ne \rj) \sim \nu^{n-1}$. Conditioned on $\rj=j, \rW_{-j}=W_{-j}$, the $i$-th player then constructs their input $\rX_i$ for $\Pi$ as follows: set $\rX_i(j)=U_i$ and sample $\rX_i(j')|\rW_{j'}=W_{j'}$ using private randomness. The players then run the protocol $\Pi$ on their joint inputs $\rX=(\rX_1,\ldots,\rX_k)$. Note that the public randomness used is $\rR=(\rj,\rW_{-\rj})$.

To prove the first claim, observe that if the input to the AND function $\rU$ is distributed as $\rU \sim \sigma$, then $\rX \sim \disprob{0}{\nu}$; and if $U=1^k$ then $\rX \sim \disprob{1}{\nu}$.
Thus $\Lambda$ has error at most $2\%$ with respect to the uniform mixture of the input distributions $\sigma$ and $1^k$. Thus with respect to the input $1^k$, the error is at most $4\%$. Since $\sigma(0^k)=1/2$, the error with respect to the input $0^k$ is at most $8\%$.

For the second claim, note that conditioned on $\rR=R=(j,W_{-j})$, the joint distribution of $(\rd, \rU, \Lambda(\rU, R))$ and of $(\rW_j,\rX^j, \pi)$ is identical. Thus
$$
I(\rU_i:\Lambda(\rU, R)|\rd, \rR=R) = I(\rX_i(j):\rpi|\rW_j, \rj=j, \rW_{-\rj}=W_{-j})
$$
Averaging over $R$ gives
\begin{align*}
I(\rU_i:\Lambda(\rU, R)|\rd, \rR) &= \frac{1}{n} \sum_{j \in [n]} I(\rX_i(j):\rpi|\rW_j, \rj=j, \rW_{-\rj}=W_{-j})\\
&= \frac{1}{n} \sum_{j \in [n]} I(\rX_i(j):\rpi|\rW).
\end{align*}
\end{proof}

\begin{proof}[Proof of \Cref{lemma:pi_to_lambda}]
Let $\Lambda$ be the protocol given by \Cref{claim:pi_to_lambda1}. Then
$$
L = \sum_{i \in [k]} \frac{1}{\nu(i)} I(\rU_i: \Lambda(\rU,\rR)|\rd, \rR).
$$
Simplifying the inner terms give
\begin{align*}
\frac{1}{\nu(i)} I(\rU_i:\Lambda(\rU,\rR)|\rd, \rR) &= \frac{1}{\nu(i)} \sum_{j \in [k]} \nu(j) \cdot I(\rU_i:\Lambda(\rU,\rR)|\rd=j, \rR) \\
&= I(\rU_i:\Lambda(\rU,\rR)|\rd=i, \rR)
\end{align*}
Note that conditioned on $\rd=i$, the joint distribution of $(\rU_i, \rU)$ is the same as $(\rb, e_i[\rb])$. Thus
$$
L = \sum_{i \in [k]} I(\rb:\Lambda(e_i[\rb], \rR)| \rR).
$$
\end{proof}

\subsubsection{Bounding the information complexity of AND functions}

We prove the following lemma in this subsection, which then proves \Cref{thm:udisj_prob} given \Cref{lemma:asys_info_bits}, \Cref{lemma:info_mixed_lb} and \Cref{lemma:pi_to_lambda}.

\begin{lemma}
\label{lemma:and_info_lb}
Let $\Lambda$ be a (public randomness) protocol for the $k$-bit AND function, using public randomness $\rR$, such that it has error at most $8\%$ with respect to the inputs $0^k$ and $1^k$. Then
$$
\sum_{i \in [k]} I(\rb, \Lambda(e_i[\rb], \rR) | \rR) = \Omega(1).
$$
\end{lemma}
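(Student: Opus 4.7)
My plan is to translate each term $I(\rb : \Lambda(e_i[\rb], \rR) \mid \rR)$ into a squared Hellinger distance between two transcript distributions, and then use the rectangle structure of communication protocols to establish a multi-party Hellinger subadditivity inequality.

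\textbf{Step 1 (information to Hellinger).} I would first condition on the public randomness $\rR = R$, so that $\Lambda(\cdot,R)$ becomes a private-randomness protocol. Write $P^R_x$ for the distribution of $\Lambda(x, R)$. Because $\rb$ is a uniform bit and $\Lambda(e_i[\rb], R)$ equals a sample from $P^R_{0^k}$ when $\rb = 0$ and from $P^R_{e_i}$ when $\rb = 1$, the standard relationship between the mutual information of a uniform binary variable with its observation and the Jensen--Shannon / squared Hellinger divergence gives
\[
I\!\bigl(\rb : \Lambda(e_i[\rb], \rR) \mid \rR\bigr) \;\geq\; c \cdot \E_R\!\bigl[h^2(P^R_{0^k},\, P^R_{e_i})\bigr]
\]
for an absolute constant $c > 0$. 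Hence it is enough to prove $\E_R \sum_{i \in [k]} h^2(P^R_{0^k}, P^R_{e_i}) = \Omega(1)$.

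\textbf{Step 2 (error to Hellinger).} Since $\Lambda$ is correct up to $8\%$ on both $0^k$ and $1^k$ and the AND function takes different values on these inputs, the mixture transcript distributions $\E_R P^R_{0^k}$ and $\E_R P^R_{1^k}$ have total variation distance at least $1 - 2 \cdot 0.08 = 0.84$. Using $h^2 \geq 1 - \sqrt{1 - d_{\mathrm{TV}}^2}$ and convexity of squared Hellinger in its arguments gives $\E_R h^2(P^R_{0^k}, P^R_{1^k}) = \Omega(1)$.

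\textbf{Step 3 (multi-party subadditivity).} The heart of the argument is, for each fixed $R$, the inequality
\[
\sum_{i=1}^k h^2(P^R_{0^k}, P^R_{e_i}) \;\geq\; \Omega\!\bigl(h^2(P^R_{0^k}, P^R_{1^k})\bigr).
\]
After fixing $R$ the protocol is deterministic, and by the rectangle property for single-bit inputs the probability of each transcript $\pi$ factorizes as $\Pr[\Lambda(x, R) = \pi] = \prod_i q_i(x_i, \pi)$. Writing $a_i(\pi) = q_i(0, \pi)$, $b_i(\pi) = q_i(1, \pi)$, and $r_i(\pi) = \sqrt{b_i(\pi)/a_i(\pi)}$, the Bhattacharyya coefficients become
\[
\mathrm{BC}(P^R_{0^k}, P^R_{1^k}) = \E_{\pi \sim P^R_{0^k}}\!\Bigl[\prod_i r_i(\pi)\Bigr], \qquad \mathrm{BC}(P^R_{0^k}, P^R_{e_i}) = \E_{\pi \sim P^R_{0^k}}[r_i(\pi)].
\]
Combining these identities with the pointwise inequality $1 - \prod_i x_i \leq \sum_i (1 - x_i)$ for $x_i \in [0,1]$, handled carefully when some $r_i(\pi)$ exceed $1$, yields the desired subadditivity. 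Together with Steps 1 and 2 this proves the lemma.

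\textbf{Main obstacle.} Step 3 is the delicate one. The naive bound $1 - \prod r_i \leq \sum (1 - r_i)$ is only valid when every $r_i \leq 1$, but $r_i(\pi) = \sqrt{b_i(\pi)/a_i(\pi)}$ can be arbitrarily large on transcripts that are much more likely under $e_i$ than under $0^k$. Handling this regime -- by splitting the sum over $\pi$ according to whether $r_i(\pi) \leq 1$, by truncating to $\min(r_i(\pi), 1)$, or by applying Cauchy--Schwarz to symmetrize the roles of $P^R_{0^k}$ and $P^R_{1^k}$ -- is the nontrivial step, essentially the multi-party Hellinger argument in the spirit of Gronemeier and Jayram--Kumar--Sivakumar.
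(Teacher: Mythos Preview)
Your strategy is sound, but it takes a markedly different route from the paper. The paper's proof of this lemma is essentially two lines: it observes that
\[
\sum_{i \in [k]} I\bigl(\rb : \Lambda(e_i[\rb],\rR)\mid \rR\bigr) \;=\; k \cdot I\bigl(\rV : \Lambda(\rV,\rR)\mid \re,\rR\bigr) \;=\; k\cdot \text{CIC}(\Lambda),
\]
where $\re$ is uniform on $[k]$ and $\rV=e_{\re}[\rb]$, and then invokes the Gronemeier/Jayram theorem $\text{CIC}(\Lambda)=\Omega(1/k)$ as a black box (noting that only correctness on $0^k$ and $1^k$ is used in that proof). So the paper reduces directly to a known result, while you are in effect re-deriving that result from scratch via Hellinger distance. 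Your route is more self-contained; the paper's is shorter.

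Two remarks on your write-up. First, a small slip: after fixing the public randomness $R$, the protocol $\Lambda(\cdot,R)$ is in general a \emph{private-coin} protocol, not a deterministic one (indeed the paper's construction of $\Lambda$ uses private randomness). The factorisation $\Pr[\Lambda(x,R)=\pi]=\prod_i q_i(x_i,\pi)$ you need still holds for private-coin protocols, so this does not break the argument, but the word ``deterministic'' should be dropped.

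Second, your Step~3 is where all the content lives, and you have not actually carried it out: you set up the Bhattacharyya identities, note the pointwise inequality $1-\prod_i r_i\le\sum_i(1-r_i)$ fails when some $r_i>1$, and then list several possible fixes without executing any. That subadditivity inequality
\[
\sum_{i=1}^k h^2\bigl(P^R_{0^k},P^R_{e_i}\bigr)\;\ge\;\Omega\bigl(h^2(P^R_{0^k},P^R_{1^k})\bigr)
\]
for rectangular transcript distributions \emph{is} the Gronemeier/Jayram lemma, and proving it is genuinely nontrivial (the known proofs go through a careful Cauchy--Schwarz / cut-and-paste argument, not a simple truncation). As written, your proposal identifies the right target and the right obstacle but stops short of a proof; if you intend to be self-contained here you must actually supply that argument, otherwise you are implicitly citing the same theorem the paper cites.
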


\Cref{lemma:and_info_lb} is very similar to previous lower bounds in the literature on information complexity \cite{bar2004information,chakrabarti2003near,gronemeier2009asymptotically}.
We need the following setup. Sample jointly $\re \in [k], \rV \in \bits^k$ as follows:
\begin{enumerate}
    \item Sample $\re \in [k]$ uniformly.
    \item Given $\re=e$, sample $\rV_e \in \bits$ uniformly and set $\rV_i=0$ for all $i \ne e$.
\end{enumerate}
Given a protocol $\Lambda$ using public randomness $\rR$, its conditional information complexity is
$$
\text{CIC}(\Lambda) = I(\rV : \Lambda(\rV, \rR) | \re, \rR).
$$
This quantity comes up naturally in the study of unique disjointness using information complexity, which started with the seminal work of \cite{bar2004information}. Gronemeier \cite{gronemeier2009asymptotically} and Jayram \cite{jayram2009hellinger} proved a tight lower bound on this quantity.

\begin{theorem}
\label{thm:gronemeir}
[\cite{gronemeier2009asymptotically,jayram2009hellinger}]
$\text{CIC}(\Lambda) = \Omega(1/k)$.
\end{theorem}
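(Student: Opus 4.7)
The plan is to follow the Hellinger-distance approach to multi-party information complexity lower bounds, pioneered by Bar-Yossef, Jayram, Kumar, and Sivakumar for the two-party case and extended to $k$ parties by Gronemeier and Jayram. Unpacking the definition, since $\rV$ has only one non-degenerate coordinate $\rV_{\re}$ given $\re$, one has $\text{CIC}(\Lambda) = \tfrac{1}{k}\sum_{i\in[k]} I(\rb:\Lambda(e_i[\rb],\rR)\mid\rR)$, so the goal reduces to proving $\sum_i I(\rb:\Lambda(e_i[\rb],\rR)\mid\rR) = \Omega(1)$.

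First I would convert information into Hellinger distance. Write $P^R_x$ for the transcript distribution of $\Lambda$ on input $x$ conditioned on public randomness $\rR=R$ (averaged over private randomness). For binary uniform $\rb$, the standard identity $I(\rb:\Lambda(e_i[\rb],R)) = \text{JS}(P^R_{0^k},P^R_{e_i})$ combined with the inequality $\text{JS}(P,Q) \ge c \cdot h^2(P,Q)$ (for an absolute constant $c$) reduces the task to proving
$$\sum_{i\in[k]} \mathbb{E}_{\rR}\bigl[h^2\bigl(P^{\rR}_{0^k},\, P^{\rR}_{e_i}\bigr)\bigr] \ =\ \Omega(1).$$

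The central structural tool is the \emph{cut-and-paste} property of deterministic $k$-party blackboard protocols: with $\rR$ fixed, the inverse image of any transcript is a combinatorial rectangle, so $P_A(\tau)\,P_B(\tau) = P_{A\wedge B}(\tau)\,P_{A\vee B}(\tau)$ for all $A,B\in\bits^k$ (modularity in $\tau$). Summing $\sqrt{\cdot}$ over $\tau$ yields the equality of Hellinger affinities, hence $h^2(P_A,P_B) = h^2(P_{A\wedge B},P_{A\vee B})$. Separately, the $8\%$ error guarantee on both $0^k$ and $1^k$ forces $d_{TV}(P_{0^k},P_{1^k}) = \Omega(1)$, and then $h^2 \ge \tfrac12 d_{TV}^2$ gives $h^2(P_{0^k},P_{1^k}) = \Omega(1)$ after averaging over $\rR$. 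It remains to transfer this to a lower bound on $\sum_i h^2(P_{0^k},P_{e_i})$.

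The main obstacle is obtaining the tight $1/k$ dependence. The natural chain argument through the hybrids $S_j = e_1+\cdots+e_j$, using cut-and-paste with $(A,B)=(S_{j-1},e_j)$ (so $A\wedge B = 0^k$ and $A\vee B = S_j$), gives the identity $h(0^k,S_j)=h(S_{j-1},e_j)$; the triangle inequality then yields $h(0^k,1^k) \le \sum_i h(0^k,e_i)$, but a Cauchy-Schwarz at this step wastes a factor of $k$ and only produces $\text{CIC}(\Lambda) = \Omega(1/k^2)$. To recover the tight $\Omega(1/k)$ bound I would follow Gronemeier and Jayram by exploiting the multiplicative structure of cut-and-paste directly: iterating the modular identity yields the factorization $P_{1^k}(\tau)/P_{0^k}(\tau) = \prod_{j\in[k]} P_{e_j}(\tau)/P_{0^k}(\tau)$, so that with $r_j(\tau) := \sqrt{P_{e_j}(\tau)/P_{0^k}(\tau)}$ one has $\mathrm{Aff}(P_{0^k},P_{1^k}) = \mathbb{E}_{P_{0^k}}\bigl[\prod_j r_j\bigr]$ under the normalization $\mathbb{E}_{P_{0^k}}[r_j^2] = 1$. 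A careful analytic argument on these ratios -- exploiting the normalization and the bound $\mathrm{Aff}(P_{0^k},P_{1^k}) \le 1$ to control the joint deviation $1 - \mathbb{E}[\prod_j r_j]$ by $O\bigl(\sum_j (1 - \mathbb{E}[r_j])\bigr)$ with no $k$-dependent loss -- then yields $\sum_i h^2(P_{0^k},P_{e_i}) = \Omega(h^2(P_{0^k},P_{1^k})) = \Omega(1)$, completing the proof.
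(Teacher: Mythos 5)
The paper does not prove this statement at all: it imports it from Gronemeier and Jayram (pointing to Gronemeier's thesis for a complete proof), and the only substantive point it adds is that the known proofs use nothing beyond correctness on the two inputs $0^k$ and $1^k$, which is why the theorem applies to the protocol $\Lambda$ arising here. Your sketch of the internals gets the standard framework right: the identity $\text{CIC}(\Lambda)=\frac1k\sum_i I(\rb:\Lambda(e_i[\rb],\rR)\mid\rR)$, the passage from mutual information to Hellinger distance, the cut-and-paste/modularity property $P_A(\tau)P_B(\tau)=P_{A\wedge B}(\tau)P_{A\vee B}(\tau)$ for number-in-hand blackboard protocols, the deduction that $8\%$ error on $0^k$ and $1^k$ forces $h^2(P_{0^k},P_{1^k})=\Omega(1)$, and the correct diagnosis that the hybrid-plus-Cauchy--Schwarz route only gives $\Omega(1/k^2)$.

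The genuine gap is the final step, which is where the theorem actually lives. Everything hinges on the claim $1-\E_{P_{0^k}}[\prod_j r_j]=O\big(\sum_j(1-\E_{P_{0^k}}[r_j])\big)$, i.e.\ $h^2(P_{0^k},P_{1^k})=O\big(\sum_j h^2(P_{0^k},P_{e_j})\big)$ with no dependence on $k$, and you support it only with ``a careful analytic argument exploiting the normalization $\E[r_j^2]=1$ and $\mathrm{Aff}\le 1$.'' Those facts do not yield it: the natural manipulations they enable (telescoping $\prod_j r_j-1=\sum_j(r_j-1)\prod_{i<j}r_i$, Cauchy--Schwarz against the sub-product constraints $\E_{P_{0^k}}[\prod_{j\in S}r_j^2]\le 1$) give a bound in terms of $\sum_j h_j$, which is exactly the lossy $1/k^2$ estimate you set out to beat. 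Obtaining the loss-free comparison is the entire content of the Gronemeier/Jayram improvement over Bar-Yossef et al.\ and requires a genuinely nontrivial argument about protocol-induced distributions, not a routine estimate; moreover, establishing it needs more of the protocol structure than the two facts you invoke. So as a proof the proposal is incomplete at its decisive step; as an appeal to the cited works it is doing the same thing the paper does, except the paper also records the one hypothesis that must be checked for the transfer to this setting, namely that the known proofs only use the error guarantee on the all-zero and all-one inputs (your sketch is at least consistent with this, since that is the only error assumption it uses).
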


In fact, the proof (although not explicitly stated as such) only relies on the assumption that $\Lambda$ has error $\le 30\%$ on both the all-zero and all-one inputs (for a full proof see Gronemeier's thesis \cite{gronemeier2010information}). As such, it applies to our protocol $\Lambda$. The following claim connects $\text{CIC}(\Lambda)$ to the quantity we aim to bound, and concludes the proof of \Cref{lemma:and_info_lb} and hence also of \Cref{thm:udisj_prob}.

\begin{claim}
$\sum_{i \in [k]} I(\rb : \Lambda(e_i(\rb), \rR) | \rR) = k \cdot \text{CIC}(\Lambda)$.
\end{claim}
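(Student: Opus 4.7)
The proof is essentially an unfolding of definitions, using the fact that by construction, the random variable $\rV$ is a one-hot encoding of $\rb$ at position $\re$, and $\re$ is uniform on $[k]$.

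The plan is to first unpack the conditional information $\text{CIC}(\Lambda) = I(\rV : \Lambda(\rV, \rR) \mid \re, \rR)$ by splitting on the value of $\re$:
\[
I(\rV : \Lambda(\rV, \rR) \mid \re, \rR) = \sum_{i \in [k]} \Pr[\re = i] \cdot I(\rV : \Lambda(\rV, \rR) \mid \re = i, \rR).
\]
Since $\re$ is uniform on $[k]$, each $\Pr[\re = i] = 1/k$, so it suffices to show that for every $i \in [k]$,
\[
I(\rV : \Lambda(\rV, \rR) \mid \re = i, \rR) = I(\rb : \Lambda(e_i[\rb], \rR) \mid \rR).
\]

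The key observation for this identity is that conditioned on $\re = i$, the joint distribution of $\rV$ is precisely that of $e_i[\rb]$ where $\rb \in \bits$ is a uniform bit. Moreover, $\rV$ and $\rb$ determine each other deterministically under this conditioning: given $i$, the vector $e_i[\rb]$ is a function of $\rb$, and $\rb$ can be read off as the $i$-th coordinate of $e_i[\rb]$. Mutual information is invariant under bijective relabeling of the random variable, so the two mutual information expressions coincide. Combining this with the averaging over $\re$ gives
\[
\text{CIC}(\Lambda) = \frac{1}{k} \sum_{i \in [k]} I(\rb : \Lambda(e_i[\rb], \rR) \mid \rR),
\]
which rearranges to the desired identity.

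There is no real obstacle here; the only thing to be careful about is that when conditioning on $\re = i$, the public randomness $\rR$ remains independent of $(\re, \rV)$ in the natural construction, so the conditional mutual information on the right-hand side is well-defined with the same $\rR$. This is guaranteed because $\rR$ is public randomness shared independently of the input distribution over $(\re, \rV)$.
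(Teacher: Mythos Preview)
Your proposal is correct and follows essentially the same route as the paper: expand $\text{CIC}(\Lambda)$ over the values of $\re$, use that $\re$ is uniform so each term has weight $1/k$, and identify $(\rV \mid \re=i)$ with $e_i[\rb]$ to conclude. You are slightly more explicit than the paper in justifying the last identification via the bijection between $\rb$ and $e_i[\rb]$ and in noting the independence of $\rR$ from $(\re,\rV)$, but the argument is the same.
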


\begin{proof}
\begin{align*}
k \cdot \text{CIC}(\Lambda) &= k \cdot I(\rV : \Lambda(\rV, \rR) | \re, \rR) \\
&= \sum_{i \in [k]} I(\rV : \Lambda(\rV, \rR) | \re=i, \rR)\\
&= \sum_{i \in [k]} I(\rb : \Lambda(e_i(\rb), \rR) | \rR).
\end{align*}
\end{proof}

\subsection{Extension to sub-distributions}

It will be convenient to extend \Cref{thm:udisj_prob} to sub-distributions. A sub-distribution $\nu$ on $[k]$ satisfies $\nu(i) \ge 0$ and $\sum \nu(i) \le 1$. We extend the definition of $\disprob{0}{\nu}$, $\disprob{1}{\nu}$ to sub-distributions as follows.

We first describe how to sample $\rX \sim \disprob{0}{\nu}$. For each $j \in [n]$, with probability $\nu(i)$ sample $\rX_i(j) \in \bits$ uniformly, and set $\rX_{i'}(j)=0$ for all $i \ne i'$; and with probability $1-\sum \nu(i)$ set $\rX_i(j)=0$ for all $i$. To sample $\rY \sim \disprob{1}{\nu}$ we follow the same process as for the distributional case: first sample $\rX \sim \disprob{0}{\nu}$, then sample a uniform $\rj \in [n]$ and set $\rY^{\rj}=1^k$ and $\rY^{j'}=\rX^{j'}$ for all $j' \ne j$. We denote by $\disprob{}{\nu}$ the even mixture of $\disprob{0}{\nu}$ and $\disprob{1}{\nu}$.
The following theorem extends \Cref{thm:udisj_prob} to sub-distributions.

\begin{theorem}
\label{thm:udisj_subprob}
Fix $n, k \ge 1$. Let $\nu$ be a sub-distribution on $[k]$.
Let $\Pi$ be a $[c_1,\ldots,c_k]$-bounded protocol which solves the distributional unique set-disjointness under input distribution $\disprob{}{\nu}$ with error $2\%$. Then
$$
\sum_{i \in [k]} \frac{c_i}{\nu(i)} = \Omega(n).
$$
\end{theorem}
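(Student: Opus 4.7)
My plan is to reduce the sub-distribution case to the distribution case of \Cref{thm:udisj_prob} by ``absorbing'' the missing mass into a dummy $(k{+}1)$-th player. Concretely, given a sub-distribution $\nu$ on $[k]$, define a distribution $\nu'$ on $[k+1]$ by $\nu'(i)=\nu(i)$ for $i\in[k]$ and $\nu'(k+1)=1-\sum_{i\in[k]}\nu(i)$. If $\sum\nu(i)=1$ the result is immediate from \Cref{thm:udisj_prob}, so assume $\nu'(k+1)>0$.

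The main step is a distributional matching: the marginal of $\disprob{b}{\nu'}$ on the first $k$ coordinates equals $\disprob{b}{\nu}$ for both $b\in\{0,1\}$. For $b=0$, in $\disprob{0}{\nu'}$ each coordinate $j$ independently picks a player $i\in[k+1]$ with probability $\nu'(i)$ and gives that player a uniform bit. Restricting to players $1,\ldots,k$, with probability $\nu(i)$ player $i\le k$ gets a uniform bit and the others are $0$, and with probability $\nu'(k+1)=1-\sum\nu(i)$ all of $X_1(j),\ldots,X_k(j)$ are $0$; this is exactly the sampling rule of $\disprob{0}{\nu}$. The case $b=1$ follows because $\disprob{1}{\nu'}$ and $\disprob{1}{\nu}$ are both obtained from the respective $\disprob{0}{}$ distributions by planting $1$'s at a uniform coordinate $\rj$, and truncating a planted $1^{k+1}$ to its first $k$ entries yields $1^k$. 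Consequently $\disprob{}{\nu'}$ restricted to the first $k$ players equals $\disprob{}{\nu}$.

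Next I simulate $\Pi$ as a $(k+1)$-party protocol $\Pi'$ in which players $1,\ldots,k$ execute $\Pi$ verbatim on their own inputs and the $(k{+}1)$-th player never speaks. Since the correctness of $\Pi$ depends only on the first $k$ players' inputs and on the event being distinguished, and since we just showed that those inputs have the sub-distribution law under the marginal, $\Pi'$ has the same error (at most $2\%$) on $\disprob{}{\nu'}$ as $\Pi$ has on $\disprob{}{\nu}$. By construction $\Pi'$ is $[c_1,\ldots,c_k,0]$-bounded.

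Finally, I apply \Cref{thm:udisj_prob} to $\Pi'$ and $\nu'$, obtaining
\[
\sum_{i\in[k+1]}\frac{c_i}{\nu'(i)} = \Omega(n).
\]
The $(k{+}1)$-th term vanishes because $c_{k+1}=0$, leaving $\sum_{i\in[k]} c_i/\nu(i)=\Omega(n)$, as desired. I expect no serious obstacle; the only care needed is the marginal computation for $\disprob{0}{}$, which hinges on the fact that the ``not chosen'' event in the sub-distribution has exactly the same probability as the ``player $k{+}1$ was chosen'' event in $\nu'$, so the reduction is measure-preserving rather than merely a coupling.
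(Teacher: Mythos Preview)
Your proposal is correct and follows essentially the same approach as the paper: extend $\nu$ to a distribution $\nu'$ on $[k+1]$ by giving the leftover mass to a dummy player $k{+}1$ who never speaks, and apply \Cref{thm:udisj_prob} to the resulting $[c_1,\ldots,c_k,0]$-bounded protocol. Your write-up is in fact more careful than the paper's, as you explicitly verify that the marginal of $\disprob{b}{\nu'}$ on the first $k$ players equals $\disprob{b}{\nu}$ and you separately handle the case $\sum_i \nu(i)=1$.
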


\begin{proof}
Extend $\nu$ to a distribution $\nu'$ on $[k+1]$ by setting $\nu'(i)=\nu(i)$ for $i \in [k]$ and $\nu'(k+1)=1 - \sum \nu(i)$. Extend $\Pi$ to a protocol $\Pi'$ for $k+1$ players where player $k+1$ does not participate in the protocol at all. Thus $\Pi'$ is a $[c_1,\ldots,c_k,0]$-bounded protocol. The proof follows by applying \Cref{thm:udisj_prob} to $\Pi'$ and $\nu'$.
\end{proof}

\subsection{Extension for fixed set sizes}

We now use the results we proven to deduce  \Cref{thm:intro:udisj}. Namely, the lower bound for fixed set sizes. We first set some notations.

Let $\s=[s_1,\ldots,s_k]$ denote the set sizes where $s_i \ge 1$ and $\sum s_i \le n$. Define
$$
\Fsize{}{\s} = \{X \in (\bits^n)^k: \forall i \in [k], |X_i|=s_i\}.
$$
For $b \in \bits$ define $\Fsize{b}{\s}=\F{b} \cap \Fsize{}{\s}$ and $\dissize{b}{\s}$ to be the uniform distribution over $\Fsize{b}{\s}$. Our hard distribution $\dissize{}{\s}$ will be an even mixture between $\dissize{0}{\s}$ and $\dissize{1}{\s}$. Equivalently, sample $\rb \in \bits$ uniformly and take $\rX \sim \dissize{\rb}{\s}$.
We prove a communication lower bound on protocols which solve unique set-disjointness under input distribution $\dissize{}{\s}$. 

\begin{theorem}
\label{thm:udisj_size}
Let $\s=[s_1,\ldots,s_k]$ with $\sum s_i \le n/2$.
Let $\Pi$ be a $[c_1,\ldots,c_k]$-bounded $k$-party protocol which solves the unique set-disjointness problem under input distribution $\dissize{}{\s}$ with error $1\%$. Then 
$$
\sum_{i \in [k]} \frac{c_i}{s_i} = \Omega(1).
$$
\end{theorem}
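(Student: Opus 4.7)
The plan is to deduce \Cref{thm:udisj_size} from \Cref{thm:udisj_subprob} by a reduction. Set $\nu(i) = 2 s_i/n$ for each $i \in [k]$. Since $\sum s_i \le n/2$, we have $\sum \nu(i) \le 1$, so $\nu$ is a valid sub-distribution on $[k]$. Moreover, under $\disprob{b}{\nu}$ each player's expected set size is exactly $s_i$. With this choice, the two target bounds match: $\sum_i c_i/\nu(i) = (n/2) \sum_i c_i/s_i$, so an $\Omega(n)$ lower bound on the left yields the desired $\Omega(1)$ lower bound on the right.

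Given a $[c_1,\ldots,c_k]$-bounded protocol $\Pi$ that solves $\dissize{}{\s}$ with error $1\%$, I would construct a $[c_1,\ldots,c_k]$-bounded protocol $\Pi'$ that solves $\disprob{}{\nu}$ with error at most some small constant (say $2\%$), and then invoke \Cref{thm:udisj_subprob} on $\Pi'$. The communication bound is inherited because $\Pi'$ calls $\Pi$ once and uses only shared randomness for its own preprocessing. On input $(X_1,\ldots,X_k) \sim \disprob{b}{\nu}$ each player $i$ locally reshapes its input $X_i$ into a set $Z_i$ of size exactly $s_i$, and then the players execute $\Pi$ on $(Z_1,\ldots,Z_k)$.

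The reshaping is the heart of the reduction and must be engineered so that the induced joint distribution on $(Z_1,\ldots,Z_k)$ is within a small total variation distance of $\dissize{b}{\s}$. The natural plan is: using shared randomness, designate pairwise-disjoint pools $P_1,\ldots,P_k \subset [n]$ of slightly larger size than the typical fluctuation $O(\sqrt{s_i})$; each player $i$ first deletes from $X_i$ any element of $\cup_{j\ne i} P_j$ (to avoid creating spurious intersections with another player's future additions), and then pads from $P_i \setminus X_i$ or truncates in a canonical order so that $|Z_i|=s_i$. Since the pools are pairwise disjoint and player $i$ only adds from $P_i$, the disjointness in the $b=0$ case is preserved. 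A standard concentration/local-CLT argument shows the resulting conditional distribution of $Z_i$ matches the $\dissize{b}{\s}$ marginal up to small TV error.

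The main obstacle is to control the total variation between the distribution of $(Z_1,\ldots,Z_k)$ and $\dissize{b}{\s}$, and in particular to handle the $b=1$ case where the common ``needle'' $x^\ast$ must survive all local cleanups and truncations. This requires choosing the pools small enough that $\Pr[x^\ast \in \bigcup_i P_i]$ is negligible, and large enough that padding/truncation succeeds with high probability; since $E[|X_i|]=s_i$ with fluctuation $O(\sqrt{s_i})$, pools of size $\Theta(\sqrt{s_i}\log(1/\varepsilon))$ suffice, and the assumption $\sum s_i \le n/2$ leaves plenty of room for disjoint pools inside $[n]$. Combined with the $1\%$ error of $\Pi$ on $\dissize{}{\s}$, a union bound over the bad events (needle destroyed, size fluctuation exceeding the pool, unintended collision) keeps the total error of $\Pi'$ below $2\%$, and then \Cref{thm:udisj_subprob} completes the proof.
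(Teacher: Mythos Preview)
Your high-level reduction is the same as the paper's: choose $\nu(i)\propto s_i$, turn a $\disprob{}{\nu}$-input into a $\dissize{}{\s}$-input by locally adjusting each player's set to the correct size, run $\Pi$, and invoke \Cref{thm:udisj_subprob}. The paper's execution, however, is considerably cleaner and avoids the delicate parts of your plan. It sets $\nu(i)=s_i/(4n)$ and runs the sub-distribution on a universe of size $n/2$, so $\E[|X_i|]=s_i/2$; hence with high probability every $|X_i|\le s_i$ and only \emph{padding} is ever needed. The padding elements are taken from fixed disjoint blocks $T_1,\dots,T_k$ living in the \emph{other} half of $[n]$, so they can never touch the needle or any other player's input. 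Finally, a shared random permutation of $[n]$ (via \Cref{claim:F_size_perm}) is applied, which makes the reshaped input \emph{exactly} $\dissize{b}{\s}$-distributed; this replaces your ``local-CLT'' total-variation estimate, which you did not actually carry out.

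There is also a genuine gap in your argument that the paper closes explicitly. Your claim that the union bound over ``size fluctuation exceeds the pool'' keeps the error under $1\%$ fails when many $s_i$ are small: for $s_i=O(1)$, $|X_i|$ has constant-order fluctuations, so no pool of size $\Theta(\sqrt{s_i}\log(1/\varepsilon))$ makes the per-player failure probability $o(1/k)$; in the extreme case $s_i\equiv 1$, $k=n/2$, your disjoint pools would need total size $\Theta(k\log k)\gg n$. The paper handles this with a short case split you are missing: first observe one may take $c_i\ge 1$ (players with $c_i=0$ can be dropped), then if the padding-failure event $E=\{\exists i:|X_i|>s_i\}$ has $\Pr[E]\ge 1\%$, use $\Pr[E]\le\sum_i\exp(-s_i/6)$ together with the elementary bound $1/x\ge C\exp(-x/6)$ to conclude $\sum_i c_i/s_i\ge\sum_i 1/s_i=\Omega(1)$ directly. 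Only in the complementary case $\Pr[E]<1\%$ does the reshaping-and-permute protocol (now $[c_1{+}1,\dots,c_k{+}1]$-bounded, because each player sends one abort bit) get invoked. Without this split, your reduction does not go through for the small-$s_i$ regime.
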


It is clear that \Cref{thm:udisj_size} implies \Cref{thm:intro:udisj}, but in fact they are equivalent. Before proving it we need the following claim.

\begin{claim}
\label{claim:F_size_perm}
Let $b \in \{0,1\}$, $X \in \Fsize{b}{\s}$. Let $\rSigma$ be a random permutation of $[n]$ and let $\rSigma(X)$ denote the result of applying $\rSigma$ to $X$. Then $\rSigma(X)$ is uniform in $\Fsize{b}{\s}$.
\end{claim}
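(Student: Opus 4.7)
The plan is to exhibit $\Fsize{b}{\s}$ as a single orbit under the natural action of $S_n$ on $(\bits^n)^k$ that permutes the coordinate $[n]$, and then invoke the standard fact that if a finite group $G$ acts transitively on a finite set $T$, then for any fixed $X \in T$, applying a uniformly random $\rg \in G$ produces a uniform element of $T$ (each fiber has size $|{\rm Stab}(X)|$).

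First I would verify that the action preserves $\Fsize{b}{\s}$ for each $b \in \{0,1\}$. Permuting coordinates clearly preserves each $|X_i|$, so $\Fsize{}{\s}$ is invariant. It also preserves the disjointness structure (the condition $|X^j|\le 1$ for all $j$, or the existence of a unique $j$ with $|X^j|=k$), so each of $\Fsize{0}{\s}$ and $\Fsize{1}{\s}$ is invariant. That part is immediate from the definitions.

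The main content is transitivity. For $\Fsize{0}{\s}$, given $X,Y \in \Fsize{0}{\s}$, the sets $X_1,\ldots,X_k$ are pairwise disjoint with $|X_i|=s_i$, and likewise for $Y$; since $\sum s_i \le n$, both $\bigcup_i X_i$ and $\bigcup_i Y_i$ have the same size $\sum s_i \le n$. I would build $\sigma$ by choosing, for each $i$, an arbitrary bijection $X_i \to Y_i$ (possible because $|X_i|=|Y_i|=s_i$), gluing these together (using that the $X_i$ are pairwise disjoint and so are the $Y_i$), and then extending arbitrarily to a bijection of $[n]$ on the complements. For $\Fsize{1}{\s}$ the argument is essentially the same, with the additional requirement that $\sigma$ map the unique intersection coordinate of $X$ to the unique intersection coordinate of $Y$; one sets $\sigma(j_X)=j_Y$ first and then bijects $X_i\setminus\{j_X\}$ to $Y_i\setminus\{j_Y\}$ for each $i$, extending as before. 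I don't anticipate any real obstacle here; it is a pure combinatorial bookkeeping.

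Once transitivity is established, the conclusion is immediate: fix $X \in \Fsize{b}{\s}$ and consider the map $\sigma \mapsto \sigma(X)$ from $S_n$ to $\Fsize{b}{\s}$. Its image is all of $\Fsize{b}{\s}$ by transitivity, and the preimage of any $Y$ is the coset $\{\sigma : \sigma(X)=Y\}$, which has size $|{\rm Stab}(X)|$ independent of $Y$. Hence a uniform $\rSigma$ pushes forward to the uniform distribution on $\Fsize{b}{\s}$, which is exactly the claim.
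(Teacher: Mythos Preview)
Your proposal is correct and follows essentially the same approach as the paper: the paper's proof simply asserts that the symmetric group acts transitively on $\Fsize{b}{\s}$ and invokes the standard consequence that a uniform group element maps any fixed point to a uniform orbit element. Your write-up is in fact more detailed than the paper's, spelling out explicitly how to build the witnessing permutation in both the disjoint and unique-intersection cases.
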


\begin{proof}
The claim follows as permutations on $[n]$ act transitively on $\Fsize{b}{\s}$. Namely, for any $X,X' \in \Fsize{b}{\s}$ there exists a permutation $\Sigma$ on $[n]$ such that $\Sigma(X)=X'$. This implies that a uniform permutation maps $X$ to a uniform element in the domain $\Fsize{b}{\s}$.
\end{proof}

\begin{claim}
\label{claim:udisj_equiv}
\Cref{thm:intro:udisj} and \Cref{thm:udisj_size} are equivalent.
\end{claim}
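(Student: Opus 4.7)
The plan is to derive both implications from Yao's minimax principle combined with the permutation-symmetry of the fixed-size promise classes established in \Cref{claim:F_size_perm}; in neither direction is there a substantive new idea, only a careful matching of error parameters and communication budgets.

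The easy direction is that \Cref{thm:udisj_size} implies \Cref{thm:intro:udisj}. Given a randomized $[c_1,\ldots,c_k]$-bounded protocol solving the $[s_1,\ldots,s_k]$-size unique set-disjointness problem with high probability, I would first reduce its error by $O(1)$ parallel repetitions with majority vote, producing a randomized $[O(c_1),\ldots,O(c_k)]$-bounded protocol with worst-case error below $1\%$. Averaging over its private and public coins (Yao's principle) then yields a deterministic $[O(c_1),\ldots,O(c_k)]$-bounded protocol with error at most $1\%$ under $\dissize{}{\s}$. Applying \Cref{thm:udisj_size} gives $\sum_i c_i/s_i = \Omega(1)$, as required.

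For the converse, let $\Pi$ be a deterministic $[c_1,\ldots,c_k]$-bounded protocol with error $1\%$ under $\dissize{}{\s}$. I would construct a public-coin randomized protocol $\Pi'$ by symmetrization: the players sample a uniformly random permutation $\rSigma$ of $[n]$ using public randomness, each player $i$ locally replaces its input $X_i$ by $\rSigma(X_i)$, and they then run $\Pi$ on $(\rSigma(X_1),\ldots,\rSigma(X_k))$. Permutations of $[n]$ preserve set sizes and preserve each of the promise classes $\F{0},\F{1}$, so $\Pi'$ remains a valid $[c_1,\ldots,c_k]$-bounded protocol for $[s_1,\ldots,s_k]$-size unique set-disjointness. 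By \Cref{claim:F_size_perm}, for every fixed input $X \in \Fsize{b}{\s}$ the permuted input $\rSigma(X)$ is uniform in $\Fsize{b}{\s}$; hence the error of $\Pi'$ on $X$ equals the error of $\Pi$ under $\dissize{b}{\s}$, which is at most $2\%$ since $\dissize{}{\s}$ is the even mixture of $\dissize{0}{\s}$ and $\dissize{1}{\s}$. Thus $\Pi'$ is a randomized protocol with worst-case error at most $2\%$, qualifying as a high-probability protocol, and \Cref{thm:intro:udisj} gives $\sum_i c_i/s_i = \Omega(1)$.

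The only real obstacle is bookkeeping: one must confirm that ``high probability'' in \Cref{thm:intro:udisj} may be taken to mean any fixed constant below $1/2$ (handled by standard amplification), and that the error $1\%$ in the distributional statement is tight enough that the symmetrization step lands strictly below $1/2$ in the worst case. Both are routine, so the equivalence follows.
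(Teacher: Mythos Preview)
Your proposal is correct and follows essentially the same approach as the paper: one direction is the standard Yao averaging argument, and the other is the permutation-symmetrization trick via \Cref{claim:F_size_perm}. You are somewhat more careful than the paper about error amplification and the $1\%$ versus $2\%$ bookkeeping, but the ideas are identical.
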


\begin{proof}
We are comparing the multi-party unique set-disjointess problem for sizes $\s=[s_1,\ldots,s_k]$ in two settings: worst-case inputs, and uniform inputs. Clearly, a protocol for worst-case inputs implies one under uniform inputs with the same communication and error guarantees. In the other direction, let $X \in \Fsize{b}{\s}$ be any input for unique set-disjointness. The players, using public randomness, sample a uniform permutation $\rSigma$ on $[n]$, and each applies it to their input. By \Cref{claim:F_size_perm} we know that $\rSigma(X)$ is distributed as $\dissize{b}{\s}$. They can now apply a protocol that solves unique-set disjointness under input $\dissize{}{\s}$. 
\end{proof}

We now turn to prove \Cref{thm:udisj_size}.

\begin{proof}[Proof of \Cref{thm:udisj_size}]
First, note that may assume $c_i \ge 1$ for all $i$, since we can remove players with $c_i=0$ from the game, as they are not allowed to send any bits. 

Let $\Pi$ be a protocol as assumed in \Cref{thm:udisj_size}. Namely, it is $[c_1,\ldots,c_k]$-bounded and solves unique set-disjointness under input distribution $\dissize{}{\s}$ with error $1\%$, where $\s=[s_1,\ldots,s_k]$ satisfies $\sum s_i \le n/2$. We will use it to design a $[c_1+1,\ldots,c_k+1]$-bounded protocol $\Pi'$ which solves unique set-disjointness in a specific sub-distributional case with error $2\%$, and then appeal to \Cref{thm:udisj_subprob}.

Next, define a sub-distribution $\nu$ on $[k]$ by $\nu(i) = \frac{s_i}{4n}$.
We consider its corresponding distributional input $\disprob{}{\nu}$ on inputs of size $n/2$ bits. Let $\rX=(\rX_1,\ldots,\rX_k) \sim \disprob{}{\nu}$ where $\rX \in (\bits^{n/2})^k$. Each $\rX_i$ is distributed Binomially $\text{Bin}(n/2, \nu(i))$ with expected size $\E[|\rX_i|] = \frac{s_i}{2}$. Thus by the Hoeffding bound,
$$
\Pr[|\rX_i| > s_i] \le \exp(-s_i/6).
$$
Let $E$ denote the event that $|\rX_i| > s_i$ for some $i \in [k]$. Then
$$
\Pr[E] \le \sum_{i \in [k]} \exp(-s_i/6).
$$

We first analyze the case that $\Pr[E] \ge 1\%$. In this case, since $c_i \ge 1$ by assumption, and since $\frac{1}{x} \ge C \exp(-x/6)$ for some absolute constant $C>0$ for all $x \ge 1$, we get
$$
\sum_{i \in [k]} \frac{c_i}{s_i} \ge C \sum_{i \in [k]} \exp(-s_i/6) \ge C \Pr[E] = \Omega(1). 
$$
From now on we assume $\Pr[E] < 1\%$.

We now design the protocol $\Pi'$. 
First, each player checks if their input $X_i$ satisfies $|X_i|>s_i$. If so, the protocol aborts. This requires each player to send one bit, and by assumption it aborts with probability at most $1\%$. Otherwise, each player extends their input $X_i$ to a new input $Y_i \in \bits^n$ of size $|Y_i|=s_i$ as follows.

Before the protocol starts, the players agree ahead of time on pairwise disjoint subsets $T_1,\ldots,T_k$ with $|T_i|=s_i$, supported in the last $n/2$ coordinates (so they do not overlap the inputs $X_1,\ldots,X_k$). Now, the $i$-th player adds arbitrary $s_i-|X_i|$ elements from $T_i$ to their set $X_i$; we denote the new input $Y_i \in \bits^n$. Note that $Y=(Y_1,\ldots,Y_k)$ satisfies the same promise as $X=(X_1,\ldots,X_k)$; namely, either they are pairwise disjoint, or they have a common element and except for it they are pairwise disjoint.

We would like to apply $\Pi$ to $Y$. However we cannot quite yet; while it is true that $Y \in \Fsize{0}{\s}$ or $Y \in \Fsize{1}{\s}$, its distribution is not uniform in the sets. However, here we can apply \Cref{claim:F_size_perm} to make the distribution of $Y$ uniform in the respective family. The players use public randomness to sample a permutation $\rSigma$ on $[n]$ and apply it to $Y$. Now we can apply $\Pi(\rSigma(Y))$ which would give the correct with error $2\%$ by assumption. The proof now follows from \Cref{thm:udisj_subprob}.
\end{proof}

\section{Interval systems}
\label{sec:interval}

Recall that our plan is to use the lower bounds for multi-party unique set-disjointness in order to prove lower bounds for streaming algorithms for the needle problem. In order to effectively embed the inputs for unique set-disjointness inside streams, we introduce a combinatorial construct that we call \emph{interval systems}. 

\begin{definition}[Interval]
An interval is a non-empty set of the form $I=\{a,a+1,\ldots,b\}$ for some $a \le b$.
\end{definition}

\begin{definition}[Interval systems]
A $[t]$-interval system is a set $F=\{I_1,\ldots,I_k\}$ of $k$ pairwise disjoint intervals supported in $[t]$. If we want to specify the number of intervals, we say $F$ is a $[t,k]$-interval system. 
\end{definition}

\begin{definition}[Randomized interval systems]
A randomized $[t]$-interval system $\cF$ is a distribution over $[t]$-interval systems $F$. Similarly, a randomized $[t,k]$-interval system $\cF$ is a distribution over $[t,k]$-interval systems $F$.
\end{definition}

Next, we define for an interval system a corresponding distribution over sets $T \subset [t]$. 

\begin{definition}[Set distribution for interval systems]
Let $F$ be a $[t]$-interval system. We denote by $\set(F)$ the distribution over sets $T \subset [t]$ obtained by choosing uniformly one element from each interval $I \in F$. 

If $\cF$ is a randomized $[t]$-interval system, then we define $\set(\cF)$ as follows: first sample $F \sim \cF$ and then sample $T \sim \set(F)$.
\end{definition}

Observe that if $\cF$ is a randomized $[t,k]$-interval system, then $\set(\cF)$
is a distribution over $k$-sets in $[t]$ (a $k$-set is a set of size $k$). Our goal will be to simulate the uniform distribution over $k$-sets in $[t]$. We call such randomized interval systems \emph{perfect}.

\begin{definition}[Perfect interval systems]
A randomized $[t,k]$-interval system $\cF$ is called \emph{perfect} if $\set(\cF)$ is the uniform distribution over all $k$-sets in $[t]$.
\end{definition}

There are many ways to construct perfect randomized $[t,k]$-interval systems. For example, a naive way is to sample $k$ uniform coordinates $i_1,\ldots,i_k \in [t]$, and then take the distribution over $F=\{\{i_1\},\ldots,\{i_k\}\}$. 
However, for an efficient reduction, we would need interval systems with as long intervals as possible. Technically, the efficiency of the reduction will be controlled by the following notion of \emph{value} of interval systems.

\begin{definition}[Value of interval systems]
Let $F$ be a $[t]$-interval system. Its value is
$$
\val(F) = \sum_{I \in F} \frac{1}{|I|}.
$$
If $\cF$ is a randomized $[t]$-interval system then its value is
$$
\val(\cF) = \E_{F \sim \cF} \left[ \val(F) \right].
$$
\end{definition}

In order to prove strong lower bounds on streaming algorithms, we would need a perfect distribution over $[t,k]$-intervals with as low a value as possible. The following claim gives a lower bound for this.

\begin{claim}
\label{claim:int_system_lb}
Let $F$ be a $[t,k]$-interval system. Then
$$
\val(F) \ge \frac{k^2}{t}.
$$
\end{claim}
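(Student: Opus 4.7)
The plan is to observe that this is an immediate consequence of the Cauchy--Schwarz inequality (equivalently, the AM--HM inequality) applied to the interval lengths. Since $F = \{I_1,\ldots,I_k\}$ consists of pairwise disjoint intervals contained in $[t]$, the lengths satisfy
$$
\sum_{j=1}^k |I_j| \le t.
$$

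Now I would apply Cauchy--Schwarz to the vectors $(\sqrt{|I_j|})_{j=1}^k$ and $(1/\sqrt{|I_j|})_{j=1}^k$, obtaining
$$
k^2 = \left( \sum_{j=1}^k \sqrt{|I_j|} \cdot \frac{1}{\sqrt{|I_j|}} \right)^{\!2} \le \left( \sum_{j=1}^k |I_j| \right) \left( \sum_{j=1}^k \frac{1}{|I_j|} \right) \le t \cdot \val(F).
$$
Rearranging gives $\val(F) \ge k^2 / t$, as desired.

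There is no real obstacle here; the only thing to be careful about is recording that equality in Cauchy--Schwarz holds exactly when all $|I_j|$ are equal and their union exhausts $[t]$, i.e., when the intervals evenly partition $[t]$ into $k$ blocks of length $t/k$. This confirms the bound is tight and motivates the subsequent goal of constructing \emph{perfect} randomized interval systems whose value is near $k^2/t$ (up to logarithmic factors), as promised by \Cref{thm:intro:interval}.
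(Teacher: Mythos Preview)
Your proof is correct and takes essentially the same approach as the paper: both use the fact that $\sum |I_j| \le t$ together with the AM--HM/Cauchy--Schwarz inequality to bound $\sum 1/|I_j|$ from below by $k^2/t$. The paper phrases it as ``the expression is minimized when all the $s_i$ are equal,'' while you invoke Cauchy--Schwarz explicitly, but these are the same argument.
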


\begin{proof}
Let $F=\{I_1,\ldots,I_k\}$ where $|I_i|=s_i$. We have $\sum s_i \le t$, and $\val(F)=\sum \frac{1}{s_i}$. This expression is minimized when all the $s_i$ are the equal, and hence
$$
\val(F) \ge k \cdot \frac{k}{ \sum s_i} \ge \frac{k^2}{t}.
$$
\end{proof}

Our main technical result in this section is a construction of a perfect randomized $[t,k]$-interval system with value close to optimal. We do so by designing a randomized algorithm that samples $[t,k]$-interval systems. We will show that its output distribution is perfect, and of value close to the minimum given by \Cref{claim:int_system_lb}.

It will be convenient to make the following definition of ``shifting'' an interval or an interval system. For an interval $I=[a,b]$ and an integer $c$, define $I+c=[a+c,b+c]$. For an interval system $F=\{I_1,\ldots,I_k\}$ define $F+c=\{I_1+c,\ldots,I_k+c\}$.

\begin{algorithm}[H]
\label{alg:interval}
\caption{SampleIntervalSystem}
\DontPrintSemicolon
\KwIn{$t \ge 1$, $k \ge 0$ with $k \le t$}
\KwOut{$[t,k]$-interval system $F$}
\BlankLine
\uIf{$k=0$}{
    \Return{$F=\{\}$}
}
\uElseIf{$k=1$}{
    \Return{$F=\{[t]\}$}
}
\Else{
    Let $s=\lceil t/2 \rceil$\\
    Sample $\rj \in \{0,\ldots,k\}$ with probability $\Pr[\rj=j] = \frac{{s \choose j} {t-s \choose k-j}}{{t \choose k}}$\\
    Compute $F_1=\text{SampleIntervalSystem}(s,\rj)$\\
    Compute $F_2=\text{SampleIntervalSystem}(t-s,k-\rj)$\\
    \Return{$F = F_1 \cup (F_2+s)$}
}
\end{algorithm}
\;\\
We denote by $\cF[t,k]$ the randomized $[t,k]$-interval system obtained by running $\text{SampleIntervalSystem}(t,k)$.

\begin{claim}
$\cF[t,k]$ is perfect.
\end{claim}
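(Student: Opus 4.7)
The plan is to prove the claim by strong induction on $t$ (with $k$ ranging over $\{0,1,\ldots,t\}$). The base cases $k=0$ and $k=1$ are immediate: when $k=0$ the algorithm deterministically returns the empty system, so $\set(\cF[t,0])$ is the point mass on the unique $0$-subset of $[t]$; when $k=1$ the algorithm returns $F=\{[t]\}$, and $\set(F)$ picks a uniformly random element of $[t]$, which is the uniform distribution over $1$-subsets. The genuine content lies in the recursive case $k \geq 2$.

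For the inductive step, fix $t \geq 2$ and $2 \leq k \leq t$, let $s = \lceil t/2\rceil$, and let $\rT \sim \set(\cF[t,k])$. By construction, $\rT = \rT_1 \cup (\rT_2 + s)$ where, conditionally on $\rj = j$, the set $\rT_1$ is drawn from $\set(\cF[s,j])$ and $\rT_2$ is drawn independently from $\set(\cF[t-s,k-j])$. By the inductive hypothesis, $\set(\cF[s,j])$ is uniform over $j$-subsets of $[s]$ and $\set(\cF[t-s,k-j])$ is uniform over $(k-j)$-subsets of $[t-s]$, so conditionally on $\rj=j$ the set $\rT$ is uniformly distributed over those $k$-subsets $T \subset [t]$ which satisfy $|T \cap [s]| = j$.

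Now compare to a uniformly random $k$-subset $\rT^\star$ of $[t]$. The key observation is that $|\rT^\star \cap [s]|$ follows the hypergeometric distribution
\[
\Pr\bigl[|\rT^\star \cap [s]| = j\bigr] = \frac{\binom{s}{j}\binom{t-s}{k-j}}{\binom{t}{k}},
\]
which is exactly the sampling distribution of $\rj$ in the algorithm; and, conditionally on $|\rT^\star \cap [s]| = j$, the set $\rT^\star$ is uniform over $k$-subsets of $[t]$ with that intersection size. Matching these two conditional descriptions level by level shows that $\rT$ and $\rT^\star$ have the same distribution.

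I expect no serious obstacle here; the only thing to be careful about is aligning the conditional distributions, namely verifying that for each $j$ the product of two independent uniform distributions (on $j$-subsets of $[s]$ and on $(k-j)$-subsets of $\{s+1,\ldots,t\}$, obtained via the shift by $s$) is exactly the uniform distribution on $k$-subsets $T \subset [t]$ with $|T\cap [s]| = j$. This is a standard bijection-counting argument: every such $T$ decomposes uniquely as a disjoint union of its intersections with $[s]$ and with $\{s+1,\ldots,t\}$, and each pair of such parts arises from exactly one pair $(T_1, T_2)$. With this, the two-step hypergeometric identity closes the induction and establishes that $\cF[t,k]$ is perfect.
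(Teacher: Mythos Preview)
Your proposal is correct and follows essentially the same approach as the paper's own proof: induction with base cases $k=0,1$, and in the recursive step observing that the distribution of $\rj$ is exactly the hypergeometric law of $|\rT^\star \cap [s]|$ for a uniform $k$-set $\rT^\star$, so that by the inductive hypothesis the two halves combine to give the uniform distribution on $k$-subsets of $[t]$. Your write-up is more explicit about the bijection between $k$-subsets $T$ with $|T\cap[s]|=j$ and pairs $(T_1,T_2)$, but the argument is the same.
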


\begin{proof}
The proof is by induction on $k,t$. If $k=0$ or $k=1$ this is clear from the base cases of the algorithm. If $k \ge 2$, then we sample the number of elements $j$ in the interval $[s]$
with the same probability as a uniform $k$-set in $[t]$ would. By induction, the distribution $\cF[s,j]$ of $F_1$ is a perfect randomized $[s,j]$ interval system; and the distribution $\cF[t-s,k-j]$ of $F_2$ is a perfect randomized $[t-s,k-j]$ interval system. The claim follows.
\end{proof}

We next analyze the value of $\cF[t,k]$; to simplify the analysis, we restrict to the case $t$ is a power of two. This suffices for our application, and we expect the bound to extend to general $t$ with minimal modifications. We assume below that all logarithms are in base two. 

\begin{lemma}
\label{lemma:alg_value}
Assume $t$ is a power of two. Then $\val(\cF[t,k]) \le \frac{k^2 \log(2t)}{t}$.
\end{lemma}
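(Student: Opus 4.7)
The plan is to exploit the perfection of $\cF[t,k]$ (established in the preceding claim). Couple the sample $F \sim \cF[t,k]$ together with $X \sim \set(F)$; then marginally $X$ is a uniform random $k$-subset of $[t]$. Inspecting the recursion of SampleIntervalSystem, once we condition on $X$, the sampled $\rj$ at every level is forced to equal $|X \cap [s]|$, so $F$ is a deterministic function of $X$. Unwinding, the interval $I_x \in F$ containing a given $x \in X$ is precisely the maximal dyadic ancestor of $x$ (in the standard dyadic tree on $[t]$, halving at each level down to singletons) that contains no other element of $X$. Writing $|I_x| = t/2^{D_x}$ where $D_x$ is the depth of $I_x$, we get
\[
\val(F) = \sum_{x \in X} \frac{1}{|I_x|} = \frac{1}{t} \sum_{x \in X} 2^{D_x}.
\]

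Taking expectation and using the symmetry of $X$ over $[t]$, for any fixed $a_0 \in [t]$,
\[
\E[\val(\cF[t,k])] = \sum_{a \in [t]} \Pr[a \in X] \cdot \E[2^{D_a}/t \mid a \in X] = \frac{k}{t}\,\E[2^{D_{a_0}} \mid a_0 \in X].
\]
The cases $k=0$ and $k=1$ are immediate ($\val=0$ and $\val=1/t \le \log(2t)/t$, respectively). For $k \ge 2$, observe that $D_{a_0} = 1 + \max_{y \in X \setminus \{a_0\}} d(a_0, y)$, where $d(\cdot,\cdot)$ denotes the depth of the lowest common ancestor in the dyadic tree.

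The key step is the pointwise bound $2^{\max_i c_i} \le \sum_i 2^{c_i}$, which gives
\[
\E[2^{D_{a_0}} \mid a_0 \in X] \le 2 \sum_{y \ne a_0} \Pr[y \in X \mid a_0 \in X] \cdot 2^{d(a_0, y)} = \frac{2(k-1)}{t-1} \sum_{y \ne a_0} 2^{d(a_0, y)},
\]
using $\Pr[y \in X \mid a_0 \in X] = (k-1)/(t-1)$. For any fixed $a_0$, partitioning $[t]\setminus\{a_0\}$ by LCA depth with $a_0$ (there are exactly $t/2^{d+1}$ elements $y$ with $d(a_0, y) = d$, for $0 \le d \le \log t - 1$) yields $\sum_{y \ne a_0} 2^{d(a_0, y)} = \sum_{d=0}^{\log t - 1} 2^d \cdot t/2^{d+1} = (t/2)\log t$. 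Combining,
\[
\E[\val(\cF[t,k])] \le \frac{k(k-1)\log t}{t-1} \le \frac{k^2 \log(2t)}{t},
\]
where the final inequality reduces to $\log t \le t-1$. The main subtle point will be verifying the structural characterization of $I_x$ as a function of $X$; given that, the max-to-sum step loses only a factor $2(k-1)/k \le 2$, which is exactly the slack present in the claimed bound. I expect a naive induction on $t$ to be the main trap, as the straightforward recurrence $V(t,k) \le \tfrac{\log t}{t}(k^2 + k(t-k)/(t-1))$ is too loose for small $k$; the linearity-of-expectation route above sidesteps this issue cleanly.
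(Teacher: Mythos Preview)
Your argument is correct and takes a genuinely different route from the paper. The paper proceeds by a direct induction on $t$: setting $f(t,k)=t\cdot\val(\cF[t,k])$, it derives the recurrence $f(t,k)=4\sum_j p(t,k,j)\,f(t/2,j)$ and then inductively proves the \emph{strengthened} bound $f(t,k)\le k^2+k(k-1)\log t$, using the hypergeometric moment estimates of \Cref{claim:p_moments}. So the trap you anticipated---that inducting directly on the target $k^2\log(2t)/t$ fails for small $k$---is real, but the paper escapes it by sharpening the inductive hypothesis rather than abandoning induction. Your approach is instead structural: you couple $F$ with the uniform $k$-set $X$, identify each interval of $F$ as the maximal dyadic ancestor of a point of $X$ containing no other point of $X$, and reduce everything to a single linearity-of-expectation computation over pairs. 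This makes the origin of the $\log t$ factor transparent (it is the height of the dyadic tree) and in fact yields the slightly sharper constant $k(k-1)\log t/(t-1)$ for $k\ge 2$. The paper's induction is shorter on the page but requires guessing the right hypothesis; your argument needs no guess but leans on the (correct, though worth justifying explicitly) fact that under the natural coupling $F$ is a deterministic function of $X$.
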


In order to prove \Cref{lemma:alg_value}, we will need the following technical claim, computing first and second moments for the distribution over $\rj$ in the algorithm.

\begin{claim}
\label{claim:p_moments}
Let $t,k \ge 1$, $t$ even, and $0 \le j \le k$. Define $p(t,k,j)=\frac{{t/2 \choose j}{t/2 \choose k-j}}{{t \choose k}}$. Then
$$
\sum_{j=0}^k p(t,k,j) \cdot j = \frac{k}{2}
$$
and
$$
\sum_{j=0}^k p(t,k,j) \cdot j^2 \le \frac{k(k+1)}{4}.
$$
\end{claim}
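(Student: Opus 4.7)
The plan is to recognize $p(t,k,j)$ as a hypergeometric probability and then evaluate both moments by the standard indicator-variable trick. Specifically, let $\rS$ be a uniformly random $k$-subset of $[t]$ and let $\rX = |\rS \cap [t/2]|$. Then for each $0 \le j \le k$,
$$
\Pr[\rX = j] = \frac{\binom{t/2}{j}\binom{t/2}{k-j}}{\binom{t}{k}} = p(t,k,j),
$$
so the two sums in the claim are exactly $\E[\rX]$ and $\E[\rX^2]$.

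For the first moment, write $\rX = \sum_{i \in [t/2]} \mathbf{1}[i \in \rS]$. Each indicator has expectation $k/t$ by symmetry, and there are $t/2$ of them, so $\E[\rX] = (t/2)(k/t) = k/2$, which is the first identity.

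For the second moment, I would compute $\E[\rX(\rX-1)]$ by expanding the square of the indicator sum and throwing away the diagonal: for distinct $i,i' \in [t/2]$, $\Pr[i,i' \in \rS] = \binom{t-2}{k-2}/\binom{t}{k} = k(k-1)/(t(t-1))$, so
$$
\E[\rX(\rX-1)] = \frac{t}{2}\left(\frac{t}{2}-1\right) \cdot \frac{k(k-1)}{t(t-1)} = \frac{k(k-1)}{4} \cdot \frac{t-2}{t-1}.
$$
Therefore
$$
\E[\rX^2] = \E[\rX] + \E[\rX(\rX-1)] = \frac{k}{2} + \frac{k(k-1)}{4} \cdot \frac{t-2}{t-1} \le \frac{k}{2} + \frac{k(k-1)}{4} = \frac{k(k+1)}{4},
$$
using $(t-2)/(t-1) \le 1$. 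This is the desired upper bound.

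There is essentially no obstacle here; the only thing to be careful about is that the identification of $p(t,k,j)$ with a hypergeometric probability requires $t$ even (which is assumed) and that the formula $\binom{t-2}{k-2}/\binom{t}{k}$ is valid only when $k \ge 2$, but the cases $k=0,1$ are trivial to verify directly from the definition of $p$ (in which the only nonzero terms give the claimed moments immediately).
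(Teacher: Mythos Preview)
Your proof is correct and follows essentially the same approach as the paper: both recognize $p(t,k,j)$ as the hypergeometric probability $\Pr[|\rS\cap[t/2]|=j]$ for a uniform $k$-subset $\rS\subset[t]$, and compute the moments by expanding $\rX$ as a sum of indicators and using $\Pr[i\in\rS]=k/t$, $\Pr[i,i'\in\rS]=k(k-1)/(t(t-1))$. The only cosmetic difference is that you route the second moment through $\E[\rX(\rX-1)]$ while the paper writes $\E[\rX^2]=\sum_{i,j}\Pr[i,j\in\rS]$ directly; the arithmetic is identical.
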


\begin{proof}
Let $s=t/2$. Let $T$ be a uniform subset of $[t]$ of size $k$. Then $p(t,k,j)=\Pr[|T \cap [s]|=j]$. Hence
\begin{align*}
\sum_{j=0}^k p(t,k,j) \cdot j &= \E_{T} \left[ \sum_{i \in [s]} \one{i \in T}  \right]
= \sum_{i \in [s]} \Pr[i \in T] = s \cdot \frac{k}{2s} = \frac{k}{2}
\end{align*}
and
\begin{align*}
\sum_{j=0}^k p(t,k,j) \cdot j^2 &= \E_{T} \left[ \sum_{i,j \in [s]} \one{i \in T} \cdot \one{j \in T} \right]
= \sum_{i,j \in [s]} \Pr[i,j \in T] \\
&= s \cdot \frac{k}{2s} + s(s-1) \frac{k(k-1)}{2s(2s-1)} \le \frac{k}{2} + \frac{k(k-1)}{4} = \frac{k(k+1)}{4}.
\end{align*}
\end{proof}

\begin{proof}[Proof of \Cref{lemma:alg_value}]
Let $f(t,k) = t \cdot \val(\cF[t,k])$.
We have $f(t,0)=0, f(t,1)=1$ and $f(t,k)=0$ if $k>t$. The definition of $f(t,k)$ for $k \ge 2$ is recursive. Let $p(t,k,j)=\frac{{t/2 \choose j}{t/2 \choose k-j}}{{t \choose k}}$. Then
$$
\val(\cF[t,k]) = \sum_{j=0}^k p(t,k,j) \left( \val(\cF[t/2,j]) + \val(\cF[t/2,k-j]) \right).
$$
which implies
$$
f(t,k) = 4 \sum_{j=0}^k p(t,k,j) f(t/2, j).
$$
It will be instructive to compute $f(t,2)$:
$$
f(t,2) = \frac{t}{t-1} + \frac{t-2}{t-1} f(t/2,2) \le 2 + f(t/2,2) \le 2 \log(t).
$$
We will prove by induction that
$$
f(t,k) \le k^2 + k(k-1) \log(t).
$$
We already verified this for $k=0,1,2$. For $k \ge 3$ we have by induction:
$$
f(t,k)  \le 4 \sum_{j=0}^k p(t,k,j) \left(j^2 + j(j-1) \log(t/2)\right).
$$
Applying \Cref{claim:p_moments} gives
\begin{align*}
f(t,k) &\le k(k+1) + k(k-1) \log(t/2) \\
& = 2k + k(k-1) \log(t) \\
& \le k^2 + k(k-1) \log(t).
\end{align*}
Finally we get
$$
\val(\cF[t,k]) = \frac{f(t,k)}{t} \le \frac{k^2 + k(k-1) \log(t)}{t} \le \frac{k^2 \log(2t)}{t}.
$$
\end{proof}

Our application for streaming algorithms for the needle problem has an additional restriction, that the total length of the intervals in the interval system be bounded away from $t$. We refer to such interval systems as \emph{valid}.

\begin{definition}[Valid interval systems]
A $[t]$-interval system $F$ is called \emph{valid} if $\sum_{I \in F} |I| \le t/2$. A randomized $[t]$-interval system $\cF$ is called valid if all $[t]$-interval systems $F$ in its support are valid.
\end{definition}

We next show how to refine a an interval system to obtain a valid randomized interval system, while preserving the sets distribution, and without increasing the value too much. 

\begin{lemma}
\label{lemma:refine_valid}
Assume $k \le t/6$. Let $F$ be a $[t,k]$-interval system. Then there exists a randomized $[t,k]$-interval system $\cF$ such that:
\begin{enumerate}
    \item $\set(\cF)=\set(F)$
    \item $\val(\cF)\le 5 \cdot \val(F)$
    \item $\cF$ is valid
\end{enumerate}
\end{lemma}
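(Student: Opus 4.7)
The plan is to define $\cF$ as a random refinement of $F$: independently for each $I \in F$, replace $I$ by a random sub-interval $\tilde I \subseteq I$. To preserve the set distribution, I ensure that a uniform sample from $\tilde I$ is marginally uniform in $I$. This is achieved by partitioning $I$ into consecutive sub-intervals $J_1^I,\dots,J_{c_I}^I$ of nearly equal length and setting $\tilde I = J_j^I$ with probability $|J_j^I|/|I|$: for any $x \in I$ lying in the block $J_j^I$, the probability of sampling $x$ via $\tilde I$ is $(|J_j^I|/|I|)\cdot(1/|J_j^I|) = 1/|I|$. Combined with independence across intervals, this gives property (1).

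Under this scheme, the value contribution of $I$ in $\cF$ is $\sum_j (|J_j^I|/|I|)(1/|J_j^I|) = c_I/|I|$, and the chosen block has size at most $\lceil |I|/c_I \rceil$. I take $c_I := \min(c, |I|)$ for a single parameter $c$ chosen below, so that $c_I \le c$ everywhere and (by a short case analysis on whether $|I| \ge c$ or $|I| < c$) the block size is bounded by $|I|/c + 1$. Summing over $I$ yields the two key inequalities
\[
\val(\cF) \;\le\; c\cdot \val(F), \qquad \sum_{I}|\tilde I| \;\le\; \frac{S}{c} + k \quad\text{deterministically,}
\]
where $S := \sum_{I \in F} |I| \le t$.

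Now set $c := \max\bigl(1,\lceil 3S/t \rceil\bigr)$. Since $S \le t$, we have $c \le 3$, so $\val(\cF) \le 3\,\val(F) \le 5\,\val(F)$, giving (2). For validity (3): if $c = 1$ then no refinement occurs and the total length is $S \le t/3 \le t/2$; otherwise $c \ge 3S/t$, so $S/c \le t/3$, and the total length is at most $t/3 + k \le t/3 + t/6 = t/2$ by the hypothesis $k \le t/6$.

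The main obstacle is the inherent tension between (2) and (3): shrinking intervals to meet validity directly inflates the $1/|\tilde I|$ contribution to $\val$, so the shrinkage factor $c$ must be small enough to control the value blowup while still being large enough to fit the $k$ chosen blocks into total length $t/2$. The additive ``$+k$'' loss from ceiling-rounding is exactly what forces the $k \le t/6$ hypothesis: it leaves slack $t/2 - k \ge t/3$ in which to fit $S/c$, permitting $c \le 3$ and thus the constant $5$.
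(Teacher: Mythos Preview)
Your proof is correct and follows essentially the same approach as the paper: partition each interval into at most three nearly-equal consecutive sub-intervals and replace it by one chosen with probability proportional to its length, so that the set distribution is preserved, the value blows up by a factor at most the number of parts, and the total length drops to at most $t/3 + k \le t/2$. The only cosmetic difference is that the paper fixes $\ell_i = \min(3,|I_i|)$ throughout (and bounds $\val$ via the pointwise estimate $|I_{i,a}| \ge |I_i|/5$), whereas you choose the number of parts $c = \max(1,\lceil 3S/t\rceil) \le 3$ adaptively and compute $\val(\cF)$ exactly as $\sum_I c_I/|I|$; this yields a slightly sharper constant ($3$ versus $5$) but is otherwise the same argument.
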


\begin{proof}
Let $F=\{I_1,\ldots,I_k\}$. Given an interval $I_i$ define $\ell_i = \min(3, |I_i|)$. Partition $I_i$ into $\ell_i$ intervals $\{I_{i,a}: a \in [\ell_i]\}$ of as equal length as possible, and observe that
$$
\frac{|I_i|}{5} \le |I_{i,a}| \le \frac{|I_i|}{3} + 1  \quad \forall a \in [\ell_i].
$$
Let $p_{i,a}=\frac{|I_{i,a}|}{|I_i|}$. We define a randomized $[t,k]$-interval system $\cF$, where for each $i \in [k]$ independently, we replace $I_i$ with one of its sub-intervals. Concretely, we choose $a \in [\ell_i]$ with probability $p_{i,a}$ and replace $I_i$ with $I_{i,a}$.
We now prove the claims.
\begin{enumerate}
    \item Observe that sampling a uniform element $x \in I_i$ can equivalently be sampled by first sampling $a \in [\ell_i]$ with probability $p_{i,a}$, and then sampling a uniform element $x \in I_{i,a}$. This implies that $\set(\cF)=\set(F)$.
    
    \item Since $|I_{i,a}| \ge |I_i| / 5$ for all $i,a$, the claim holds for any $F'$ in the support of $\cF$, and hence also for $\cF$.

    \item Since $|I_{i,a}| \le (|I_i|+1) / 2$ for all $i,a$, we have for any $F'=\{I_{1,a_1},\ldots,I_{k,a_k}\}$ in the support of $\cF'$ that
    $$
    \sum_{i \in [k]} |I_{i,a_i}| \le k+\frac{1}{3} \sum_{i \in [k]} |I_i| \le k+\frac{t}{3} \le \frac{t}{2}
    $$
    where the last inequality follows since we assume $k \le t/6$.
\end{enumerate}
\end{proof}

\Cref{lemma:refine_valid} applies also to randomized $[t,k]$-interval systems, by applying it to any interval system in their support. The following lemma summarizes all the facts we would need by applying it to $\cF[t,k]$.

\begin{lemma}
\label{lemma:interval_summary}
Let $k,t \ge 1$. Assume $t$ is a power of two and $k \le t/6$. Then there exists a valid perfect randomized $[t,k]$-interval system $\cF$ with
$$
\val(\cF) \le \frac{10 k^2 \log(t)}{t}.
$$
\end{lemma}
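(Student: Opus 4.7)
The plan is to combine the two ingredients developed earlier in the section: the algorithmic construction $\cF[t,k]$, which is already perfect and has near-optimal value by \Cref{lemma:alg_value}, together with the refinement procedure of \Cref{lemma:refine_valid}, which converts any interval system into a valid one while preserving its induced set distribution and inflating its value by at most a factor of $5$. Neither ingredient alone produces a system that is simultaneously perfect, valid, and of value $O(k^2 \log(t)/t)$, but composing them does.

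Concretely, I would let $\cF_0 = \cF[t,k]$ denote the distribution output by $\text{SampleIntervalSystem}(t,k)$. The claim established just before \Cref{lemma:alg_value} shows that $\cF_0$ is perfect, and \Cref{lemma:alg_value} yields $\val(\cF_0) \le k^2 \log(2t)/t$. Because $k \le t/6$, \Cref{lemma:refine_valid} (in the pointwise-randomized form noted in the remark following its proof) applies to every $F$ in the support of $\cF_0$, producing a valid randomized $[t,k]$-interval system $\cF_F$ with $\set(\cF_F)=\set(F)$ and $\val(\cF_F) \le 5\val(F)$. Define $\cF$ as the two-stage mixture: first sample $F \sim \cF_0$, then sample and output $F' \sim \cF_F$.

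To verify the three required properties of $\cF$: perfection follows from $\set(\cF) = \E_{F \sim \cF_0}[\set(\cF_F)] = \E_{F \sim \cF_0}[\set(F)] = \set(\cF_0)$ together with the perfection of $\cF_0$; validity is immediate, since every element of the support of $\cF$ lies in the support of some valid $\cF_F$; and by linearity of expectation,
\[
\val(\cF) \;=\; \E_{F \sim \cF_0}[\val(\cF_F)] \;\le\; 5\,\E_{F \sim \cF_0}[\val(F)] \;=\; 5\val(\cF_0) \;\le\; \frac{5 k^2 \log(2t)}{t}.
\]
The hypotheses $k \ge 1$, $k \le t/6$, and $t$ a power of two force $t \ge 8$, so $\log(2t) = 1 + \log t \le 2\log t$, which turns the bound into $10 k^2 \log(t)/t$ as claimed. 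No real obstacle arises here: the substantive content lies entirely in \Cref{lemma:alg_value} and \Cref{lemma:refine_valid}, and the only bookkeeping is checking that $\set$ and $\val$ interact linearly with the two-stage mixture, which is routine.
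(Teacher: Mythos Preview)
Your proposal is correct and follows exactly the approach the paper intends: apply the refinement of \Cref{lemma:refine_valid} pointwise to the support of the perfect system $\cF[t,k]$ from \Cref{lemma:alg_value}, then check that perfection, validity, and the value bound survive the two-stage mixture. The paper only sketches this (``applying it to any interval system in their support''), while you have written out the bookkeeping and the final numerical step $5\log(2t)\le 10\log t$ explicitly, but there is no substantive difference.
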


\section{Lower bound for the needle problem}
\label{sec:needle}

We prove \Cref{thm:intro:needle} in this section, by combining our lower bound for unique set-disjointness with fixed set sizes (\Cref{thm:udisj_size}) with the efficient reduction given by interval systems (\Cref{lemma:interval_summary}).

First, we recall the parameters: $n$ denotes the size of the domain, $t$ the number of samples and $p$ the needle probability. We assume throughout that $n=\Omega(t^2)$ is large enough. We would denote by $k$ the number of needles in a stream in the planted model, where $k \sim \text{Bin}(t,p)$. We denote by $\unif$ the uniform distribution over $[n]^t$. 

First, we show how to prove lower bounds when $k$ is fixed. Given a $[t,k]$-interval system $F=\{I_1,\ldots,I_k\}$, we will assume in this section that the intervals are sorted in order, namely that $I_1$ comes before $I_2$, which comes before $I_3$, and so on. We define its corresponding sizes as
$$
\size(F) = (|I_1|,\ldots,|I_k|).
$$
We recall the definition of a planted stream distribution from the introduction, where we now present it more formally.

\begin{definition}[Planted distribution for interval systems]
Let $F$ be a $[t]$-interval system. we define a planted distribution $\planted[F]$ over streams $X \in [n]^t$ as follows:
\begin{enumerate}
    \item Sample uniform needle $x \in [n]$;
    \item In each interval $I \in F$ sample uniform index $a_I \in I$ and set $X_{a_I}=x$;
    \item For all $j \in [n] \setminus \{a_I: I \in F\}$, sample $X_j \in [n]$ uniformly.
\end{enumerate}

For $\cF$ a randomized $[t]$-interval system, we define its planted distribution $\planted[\cF]$ by first sampling $F \sim \cF$ and then $X \sim \planted[F]$.
\end{definition}

We start by formalizing and proving \Cref{lemma:intro:planted-interval-lb}. Given a streaming algorithm $\alg$ and two distributions $D_0,D_1$ over streams, we say that $\alg$ distinguishes between $D_0,D_1$ with error $\delta$ if, at the end of running the algorithm, the last player can guess if the input was sampled from $D_0$ or $D_1$ and be correct with probability at least $1-\delta$. A streaming algorithm is an $\ell$-pass streaming algorithm if it makes $\ell$ passes over the data stream.

\begin{lemma}
\label{lemma:embedding_protocol}
Let $F$ be a $[t,k]$-interval system and set $\s=\size(F)$. Let $\alg$ be an $\ell$-pass streaming algorithm which distinguishes between $\planted[F]$ and $\unif$ with error $0.5\%$ and uses space $s$.
Then there is a communication protocol $\Pi$ which solves the unique set-disjointness problem under input distribution $\dissize{}{\s}$, in which each player sends $\ell s$ bits, and has error $1\%$. 
\end{lemma}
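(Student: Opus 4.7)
The plan is to carry out the standard streaming-to-communication reduction, arranged so that player $i$'s ``segment'' of the stream is exactly interval $I_i$. I will construct $\Pi$ so that the $k$ players jointly simulate $\alg$ on an embedded stream $Y\in[n]^t$, and at the end the last player outputs ``unique intersection'' iff $\alg$ declares ``planted''. Player $i$, whose input $X_i$ has size $s_i=|I_i|$, populates the $s_i$ positions of $I_i$ by placing a uniformly random arrangement of $X_i$ into them (with the permutation fixed across passes using public randomness), while entries at positions in $[t]\setminus\bigcup_j I_j$ are drawn iid uniform from $[n]$ via public randomness.

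For the communication bookkeeping, in each pass $p\in[\ell]$ I walk the players left-to-right through the stream: player $i$ processes the gap preceding $I_i$ together with $I_i$ itself, feeding the symbols into $\alg$ to update its $s$-bit state, and at the end of their segment posts the current state on the blackboard so that player $i+1$ can resume. At the end of pass $\ell$ the last player also handles the trailing gap and writes out the decision bit. Since each player writes at most one $s$-bit state per pass, $\Pi$ is $[\ell s,\ldots,\ell s]$-bounded, as required.

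The heart of the argument, and what I expect to be the main step, is to show that the embedded stream $Y$ is distributed almost exactly as in the needle model. Let $D_b$ be the law of $Y$ when $X\sim\dissize{b}{\s}$. By a direct comparison of the two sampling procedures, $D_0$ coincides with $\unif$ conditioned on the $\sum_i s_i$ entries of $Y$ in $\bigcup_i I_i$ being pairwise distinct, and $D_1$ coincides with $\planted[F]$ conditioned on the non-needle entries in $\bigcup_i I_i$ being pairwise distinct and different from the needle (using the symmetry of $\dissize{1}{\s}$ to observe that the common element is uniform in $[n]$ and hence plays the role of the planted needle). A union bound over pairs shows each conditioning event has probability at least $1-O(t^2/n)$, giving $\|D_0-\unif\|_{\mathrm{TV}}$ and $\|D_1-\planted[F]\|_{\mathrm{TV}}$ both of order $O(t^2/n)$.

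Since $n=\Omega(t^2)$ with a sufficiently large hidden constant, both of these total-variation distances are at most $1/400$. Combining with the $0.5\%$ error of $\alg$ and averaging over the even mixture $\dissize{}{\s}=\tfrac12\dissize{0}{\s}+\tfrac12\dissize{1}{\s}$, a triangle inequality bounds the error of $\Pi$ by $0.5\%+1/400+1/400<1\%$. What remains is just to verify that the per-segment simulation of $\alg$ is faithful and that the handoffs between consecutive players across all $\ell$ passes really cost only $s$ bits each, which is a routine check.
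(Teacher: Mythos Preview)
Your proposal is correct and follows essentially the same approach as the paper's proof: both embed $X_i$ as a random permutation into $I_i$, fill the gaps with uniform elements, simulate $\alg$ by passing the $s$-bit state along the stream, and bound the total-variation distance between the embedded stream and $\unif$ (resp.\ $\planted[F]$) by $O(t^2/n)$. Your distributional analysis is in fact slightly more careful than the paper's (you identify the exact conditioning event, whereas the paper's phrase ``the elements of $Y$ are uniform among all choices of $t$ distinct elements'' is imprecise for the gap positions), and you are explicit about fixing the public randomness across passes; otherwise the two arguments coincide.
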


\begin{proof}
Let $X=(X_1,\ldots,X_k) \in (\bits^n)^k$ be the input to the players, where we assume $X \sim \dissize{b}{\s}$ for some $b \in \{0,1\}$. The goal of the players is to figure out $b$. 

Let $F=\{I_1,\ldots,I_k\}$. Let $J_1,\ldots,J_k$ be a partition of $[t]$, where $I_i \subset J_i$. As a first step, each player individually constructs a stream $Y_i \in [n]^{J_i}$ based on their input $X_i$. The $i$-th player generates their stream as follows:
\begin{enumerate}
    \item For each $j \in J_i \setminus I_i$, sample $Y_i(j) \in [n]$ uniformly.
    \item Let $S_i=\{j \in [n]: X_i(j)=1\}$, where $|S_i|=s_i$ be assumption. Let $L_i \in [n]^{s_i}$ be a random permutation of $S_i$. Set $(Y_i(j): j \in I_i) = L_i$.
\end{enumerate}
Let $Y = Y_1 \circ \cdots \circ Y_k \in [n]^t$ be the concatenation of the streams. The players simulate running $\alg$ on the stream, where each player simulates it on their part of the stream, and send the internal memory of the streaming algorithm to the next player. At the end of each pass, the last player sends the internal memory back to the first player. Thus each player sends at most $\ell s$ bits. To conclude, we need to show that this allows to distinguish between $b=0$ and $b=1$.

To conclude, we compute the distribution of $Y$ based on the value of $b$, and show that when $b=0$ the distribution of $Y$ is close to uniform, and when $b=1$ it is close to the planted distribution $\planted[F]$. Thus by assumption the algorithm distinguishes between these two cases, which is our goal.

First, if $b=0$ then $X_1,\ldots,X_k$ are uniform sets of sizes $s_1,\ldots,s_k$ in $[n]$, conditioned on being pairwise disjoint. Thus the elements of $Y$ are uniform among all choices of $t$ distinct elements in $n$. Since we assume $n=\Omega(t^2)$, the statistical distance between $Y$ and $\unif$ is at most $t^2/n$, which can be made as small as we want, say $0.1\%$.

Similarly, if $b=1$ then $X_1,\ldots,X_k$ are uniform conditioned on having a unique intersection. Similarly, the assumption $n=\Omega(t^2)$ implies that the the statistical distance between $Y$ and $\planted[F]$ can be made as small as we want, say $0.1\%$.

Overall, as we assume that $\alg$ can distinguish between $\unif$ and $\planted[F]$ with error $0.5\%$, then it also distinguishes between the distributions of $Y$ for $b=0$ and $b=1$ with slightly larger error $1\%$.
\end{proof}

Combining \Cref{lemma:embedding_protocol} with \Cref{thm:udisj_size}, we obtain the following corollary which formalizes \Cref{lemma:intro:planted-interval-lb}.

\begin{lemma}
\label{lemma:planted-interval-lb}
Let $F$ be a valid $[t,k]$-interval system. Let $\alg$ be an $\ell$-pass streaming algorithm which distinguishes between $\planted[F]$ and $\unif$ with error $0.5\%$ and uses space $s$. Then
$$
\ell s = \Omega \left(\frac{1}{\val(F)} \right).
$$
\end{lemma}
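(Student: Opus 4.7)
The plan is to simply chain together the two black-box ingredients already developed in the paper: the stream-to-protocol embedding of Lemma \ref{lemma:embedding_protocol}, and the asymmetric unique set-disjointness lower bound of Theorem \ref{thm:udisj_size}. In the contrapositive direction, a hypothetical space-$s$ $\ell$-pass distinguisher $\alg$ for $\planted[F]$ vs.\ $\unif$ becomes a communication protocol, whose communication profile is then constrained by the fixed-size disjointness bound; rearranging the resulting inequality produces exactly $\ell s = \Omega(1/\val(F))$.

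Concretely, first I would apply Lemma \ref{lemma:embedding_protocol} to $\alg$. Writing $\s = \size(F) = (s_1,\ldots,s_k)$ with $s_i = |I_i|$, the lemma hands us a $k$-party protocol $\Pi$ that solves unique set-disjointness under $\dissize{}{\s}$ with error at most $1\%$, in which each player sends at most $\ell s$ bits. In the notation of Section~\ref{sec:udisj}, this means $\Pi$ is $[\ell s,\ldots,\ell s]$-bounded.

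Before invoking Theorem \ref{thm:udisj_size} I need to check its size hypothesis $\sum_i s_i \le n/2$. This is exactly where validity of $F$ enters: $F$ being valid gives $\sum_i s_i \le t/2$, and combined with the standing assumption $n = \Omega(t^2)$ this is comfortably below $n/2$ for $t$ large enough. Applying Theorem \ref{thm:udisj_size} to $\Pi$ then yields
$$
\sum_{i \in [k]} \frac{\ell s}{s_i} \;=\; \Omega(1),
$$
and pulling the common factor $\ell s$ out of the sum gives $\ell s \cdot \val(F) = \Omega(1)$, i.e.\ $\ell s = \Omega(1/\val(F))$, which is the claimed bound.

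Honestly there is no real obstacle here, since the heavy lifting is already packaged in the two cited results; the only thing to watch is the bookkeeping on error probabilities (Lemma \ref{lemma:embedding_protocol} inflates the $0.5\%$ distinguishing error to the $1\%$ error required by Theorem \ref{thm:udisj_size}, which matches exactly) and the verification that the validity assumption supplies the $\sum s_i \le n/2$ hypothesis. The substantive content of the lemma is thus fully inherited from the asymmetric disjointness lower bound, with interval systems serving as the combinatorial glue.
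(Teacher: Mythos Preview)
Your proposal is correct and follows essentially the same argument as the paper's own proof: apply Lemma~\ref{lemma:embedding_protocol} to turn $\alg$ into a $[\ell s,\ldots,\ell s]$-bounded protocol for $\dissize{}{\s}$ with error $1\%$, then invoke Theorem~\ref{thm:udisj_size} (using validity to ensure $\sum s_i \le t/2 \le n/2$) to obtain $\ell s \cdot \val(F) = \Omega(1)$. The only difference is cosmetic: you spell out the step $t/2 \le n/2$ via $n=\Omega(t^2)$, which the paper leaves implicit.
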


\begin{proof}
Let $\Pi$ be the protocol obtained by \Cref{lemma:embedding_protocol}, which solves unique set-disjointness under inputs distribution $\dissize{}{\s}$ for $\s=\size(F)=[s_1,\ldots,s_k]$, and where each player sends at most $\ell s$ bits. Since $F$ is valid we have $\sum s_i \le t/2$. \Cref{thm:udisj_size} then gives
$$
\sum_{i \in [k]} \frac{\ell s}{s_i} = \Omega(1).
$$
Recalling the definition of $\val(F)=\sum_{i \in [k]} \frac{1}{s_i}$, we can rephrase this as $\ell s \cdot \val(F) = \Omega(1)$.
\end{proof}

The following lemma, which formalizes \Cref{lemma:intro:planted-rand-interval-lb}, generalizes \Cref{lemma:planted-interval-lb} to randomized interval systems.

\begin{lemma}
\label{lemma:planted-rand-interval-lb}
Let $\cF$ be a valid randomized $[t]$-interval system. Let $\alg$ be an $\ell$-pass streaming algorithm which distinguishes between $\planted[\cF]$ and $\unif$ with error $0.1\%$ and uses space $s$. Then
$$
\ell s = \Omega \left(\frac{1}{\val(\cF)} \right).
$$
\end{lemma}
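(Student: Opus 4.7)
The plan is to reduce to the deterministic interval-system case (Lemma \ref{lemma:planted-interval-lb}) by a simple averaging argument. Since $\planted[\cF]$ is by construction the mixture $\E_{F \sim \cF}[\planted[F]]$, a streaming algorithm that distinguishes $\planted[\cF]$ from $\unif$ with small error must distinguish $\planted[F]$ from $\unif$ with small error for a \emph{typical} $F$, and on those $F$'s we can invoke the deterministic bound.

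Concretely, let $\epsilon_0 = \Pr[\alg \text{ guesses ``planted''} \mid X \sim \unif]$ and, for each $F$ in the support of $\cF$, let $\epsilon_1(F) = \Pr[\alg \text{ guesses ``uniform''} \mid X \sim \planted[F]]$. The $0.1\%$ error hypothesis on the mixture forces $\epsilon_0 \le 0.2\%$ and $\E_{F \sim \cF}[\epsilon_1(F)] \le 0.2\%$. Markov's inequality then yields $\Pr_F[\epsilon_1(F) \ge 0.8\%] \le 1/4$, so call an $F$ \emph{good} if $\epsilon_1(F) < 0.8\%$; we have $\Pr_F[\text{good}] \ge 3/4$. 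For any good $F$, the average of the two error probabilities is at most $(0.2\% + 0.8\%)/2 = 0.5\%$, so $\alg$ distinguishes $\planted[F]$ from $\unif$ with error at most $0.5\%$. Because $\cF$ is valid, every $F$ in its support is a valid $[t,k_F]$-interval system, so Lemma \ref{lemma:planted-interval-lb} applies and yields $\val(F) \ge c/(\ell s)$ for some absolute constant $c > 0$.

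It remains to average this back over $F \sim \cF$. Using $\val(F) \ge 0$ to discard the bad $F$'s, we obtain
$$
\val(\cF) = \E_{F \sim \cF}[\val(F)] \ge \Pr_F[\text{good}] \cdot \frac{c}{\ell s} \ge \frac{3c}{4 \ell s},
$$
which rearranges to $\ell s = \Omega(1/\val(\cF))$, as desired. There is no real obstacle here beyond careful bookkeeping: the only delicate point is choosing the Markov threshold so that the $0.1\%$ hypothesis of this lemma is tight enough to imply the $0.5\%$ hypothesis needed by Lemma \ref{lemma:planted-interval-lb} after losing a constant fraction of the mass of $\cF$. This slack between error constants is precisely the reason the randomized statement asks for a slightly tighter error tolerance than the deterministic one.
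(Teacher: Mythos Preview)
Your proof is correct and follows essentially the same averaging-plus-Markov strategy as the paper. The only organizational difference is that the paper uses two Markov inequalities (one on $\val(F)$, one on the error) to locate a single $F$ with both small value and small error, whereas you apply Markov only to the error, invoke Lemma~\ref{lemma:planted-interval-lb} on every good $F$ to get $\val(F)\ge c/(\ell s)$, and then average; both routes are equally valid and yield the same conclusion.
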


\begin{proof}
Sample $F \sim \cF$. Since $\val(\cF) = \E[\val(F)]$, by Markov's inequality we have
$$
\Pr_F[\val(F) > 2 \val(\cF)] \le 50\%.
$$
Next, let $\text{err}(F)$ denote the error of $\alg$ in distinguishing $\planted[F]$ from $\unif$. Since $\planted[\cF]$ is a mixture of $\planted[F]$, then the average of $\text{err}(F)$ is the error of $\alg$ in distinguishing $\planted[\cF]$ from $\unif$, which we assume is $0.1\%$. Thus
$$
\Pr_F[\text{err}(F) > 0.5\%] \le 20\%.
$$
Overall, there is some choice of $F$ in the support of $\cF$ such that $\val(F) \le 2 \val(\cF)$ and $\text{err}(F) \le 0.5\%$. The lemma follows by applying \Cref{lemma:planted-interval-lb} to $F$.
\end{proof}

We now in place to finally prove \Cref{thm:intro:needle}, giving sample-space lower bounds for any streaming algorithm that solves the needle problem.

\begin{proof}[Proof of \Cref{thm:intro:needle}]
Let $\alg$ be an $\ell$-pass streaming algorithm which can distinguish with high probability between the uniform and planted needle distribution using $t$ samples. As the inputs are stochastic, we may repeat it a few times to decrease its error. Thus, by increasing $t$ by a constant multiplicative factor, we may assume that the error is at most $0.1 \%$ and that $t$ is a power of two. 

For $k \le t$ let $\cF_k$ be the valid perfect randomized $[t,k]$-interval system given by \Cref{lemma:interval_summary}. We construct a randomized $[t]$-interval system $\cF$ by sampling $k \sim \text{Bin}(t,p)$ and taking $\cF_k$. Observe that $\planted[\cF]$ is identical to the planted needle distribution. If $\alg$ uses $s$ bits of space then \Cref{lemma:planted-rand-interval-lb} gives that
$$
\ell s = \Omega \left(\frac{1}{\val(\cF)} \right).
$$
To conclude the proof we just need to compute $\val(\cF)$. For any fixed $k$ we have by \Cref{lemma:interval_summary} that
$$
\val(\cF_k) \le \frac{10 k^2 \log(t)}{t}.
$$
Since $k \sim \text{Bin}(t,p)$ we have $\E[k^2] = p(1-p)t+p^2 t^2$. Since we assume $p=\Omega(1/t)$, the dominant term is the quadratic term, and hence $\E[k^2]=\Theta(p^2 t^2)$. Thus we get
$$
\val(\cF) = O(p^2 t \log(t)).
$$
Rearranging the terms concludes the proof, since it gives $\ell p^2 s t \log(t) = \Omega(1)$.
\end{proof}

\section{Open problems}
\label{sec:open}

We proved in \Cref{thm:intro:needle} near-tight bound for the sample vs space complexity needed for the needle problem, which proves similar near-tight bounds for the frequency estimation in stochastic streams problem. It still remains open to prove sharp bounds, removing the remaining logarithmic factor. We propose the following natural conjecture.

\begin{conjecture}
Any $\ell$-pass streaming algorithm which can distinguish with high probability between the uniform and planted models, where $p$ is the needle probability, $t$ the number of samples, $s$ the space and $n$ the domain size, satisfies $\ell p^2 st = \Omega(1)$.   
\end{conjecture}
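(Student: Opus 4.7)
The gap between \Cref{thm:intro:needle} and the conjecture is the multiplicative $\log(t)$ factor, which enters the argument at exactly one place: \Cref{lemma:interval_summary}, where a perfect randomized $[t,k]$-interval system is shown to have value $O(k^2 \log(t)/t)$ rather than the trivial lower bound $k^2/t$ from \Cref{claim:int_system_lb}. To prove the conjecture one must either construct a sharper reduction gadget, or bypass interval systems altogether.

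My first approach would be to relax the perfectness requirement. The proofs of \Cref{lemma:planted-rand-interval-lb} and \Cref{thm:intro:needle} use the identity $\planted[\cF] = \text{planted}$ in order to compose the interval-system lower bound with a distinguishing guarantee against the true planted distribution. For the chain of implications to go through, however, it suffices that $\planted[\cF]$ be indistinguishable from the planted distribution by any streaming algorithm with the space budget under consideration, or even merely that the two lie within $o(1)$ total-variation distance. I would first rewrite the reduction to tolerate this error, using a standard triangle-inequality argument that charges the extra $o(1)$ error to the distinguishing failure probability of $\alg$. Then I would construct a randomized $[t,k]$-interval system $\cF$ that is only $o(1)$-close to perfect but has $\val(\cF) = O(k^2/t)$. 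A natural candidate starts from a fixed equi-partition of $[t]$ into $k$ intervals of length $t/k$, which achieves value exactly $k^2/t$ but is not perfect because it forbids two needles in the same interval; one then mixes in a small-mass ``correction'' distribution supported on intervals that land in the offending parts of $\binom{[t]}{k}$, and chooses the correction carefully so that the total added value stays $O(k^2/t)$.

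A second approach would be to prove the conjectured bound directly, lifting the information-theoretic framework of \Cref{sec:udisj} to the streaming setting without factoring through interval systems. Given an $\ell$-pass streaming algorithm with memory state $\rM$, one can quantify the distinguishing success by the mutual information $I(\rx;\rM)$ where $\rx$ is the needle. I would try to upper bound this by decomposing the stream along the sequence of consecutive needle occurrences and applying the chain rule, using the asymmetric set-disjointness bound of \Cref{thm:udisj_size} at each piece. The point of this approach is that the natural ``needle-gap'' partition has expected total value $\Theta(p^2 t)$ with no logarithmic overhead, whereas the binary-halving construction of $\cF[t,k]$ necessarily pays a $\log t$ for its recursion depth.

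The main obstacle will be in the first approach: showing that a near-perfect interval system suffices while keeping the simulation error under control. Specifically, an $o(1)$ perturbation at the level of $\set(\cF)$ can amplify to a much larger error at the level of $\planted[\cF]$ once the uniform noise elements are folded in, since the space of full streams is exponentially larger than the space of needle positions. Controlling this amplification, and designing the correction distribution to absorb the ``in-bucket collision'' defect of the equi-partition without blowing up the value back to $\Theta(k^2 \log(t)/t)$, is the technical crux. If it turns out that every near-perfect randomized interval system has value $\Omega(k^2 \log(t)/t)$, then the conjecture will require a fundamentally new technique that does not reduce to multi-party unique set-disjointness at all.
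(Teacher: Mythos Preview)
The statement you are attempting to prove is a \emph{conjecture} in the paper, listed explicitly in the open-problems section; the paper gives no proof of it. Your submission is accordingly not a proof but a sketch of two possible research directions, and it should be evaluated as such.

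On the substance: your second approach contains a concrete error. You assert that the ``needle-gap'' partition---take the $k$ needle locations and cut $[t]$ at them---has expected total value $\Theta(p^2 t)$ with no logarithmic overhead. This is false. With $k=pt$ uniform needles in $[t]$, each gap $G$ is approximately geometric with mean $t/k$, so $\E[1/G]\approx (k/t)\log(t/k)$; summing over the $k$ gaps gives $\E\bigl[\sum 1/G_i\bigr]=\Theta\bigl((k^2/t)\log(t/k)\bigr)=\Theta\bigl(p^2 t\log(1/p)\bigr)$, which in the regime of interest is again $\Theta(p^2 t\log t)$. The binary-halving construction of $\cF[t,k]$ is not the source of the $\log t$; the log arises from the heavy lower tail of the gap distribution, and any interval system that faithfully tracks the needle positions will see it. So the second approach, as stated, does not remove the logarithm.

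Your first approach is more promising in spirit but is only a plan: you would need to exhibit a randomized $[t,k]$-interval system with value $O(k^2/t)$ whose set distribution is $o(1)$-close in total variation to uniform on $\binom{[t]}{k}$, and it is not at all clear such an object exists (your worry about ``amplification'' from $\set(\cF)$ to $\planted[\cF]$ is a red herring, since applying the same randomized map to two distributions cannot increase their TV distance; the real obstacle is the existence of the gadget itself). Absent either a construction or a different lower-bound technique, the conjecture remains open, exactly as the paper says.
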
 

Another natural conjecture is to remove the artificial restriction of $\sum s_i \le n/2$ from \Cref{thm:intro:udisj}. We need it because we do not prove the theorem directly, but rather via a reduction to the asymmetric product distribution case. We speculate that there may be a direct proof which overcomes this technical barrier (although we don't really have any application where the general bound is needed, it will be aesthetically pleasing to have a more complete result).

\bibliographystyle{alpha}
\bibliography{main.bib}

\end{document}